\tikzset{elliptic state/.style={draw,ellipse}}
\newcommand*{\tool}{\textsc{SGR}($k$)}
\newcommand*{\F}{\mathcal{F}}
\newcommand*{\Statess}{\mathit{S}}
\newcommand*{\Buchi}{B\"uchi }
\newcommand*{\GS}{\mathit{GS}} 
\newcommand*{\I}{\mathcal{I}}
\renewcommand*{\O}{\mathcal{O}}
\newcommand{\J}{\mathcal J}
\newcommand{\V}{\mathcal V}
\newcommand{\U}{\mathcal U}
\newcommand{\inp}{{\mathit{i}}}
\newcommand{\out}{\mathit{o}}
\newcommand{\travel}{{\mathit{travel}}}
\newcommand{\gar}{{\mathit{gar}}}
\newcommand{\acc}{{\mathit{acc}}}
\newcommand{\Adapter}{{\mathit{Adapter}}}
\newcommand{\Adaptee}{{\mathit{Adaptee}}}
\newcommand{\Target}{{\mathit{Target}}}
\newcommand{\Input}{{\mathit{Input}}}
\newcommand{\Output}{{\mathit{Output}}}
\newcommand{\ltl}{\mathsf{LTL}}
\newcommand{\ltlU}{\mathsf{U}}
\newcommand{\ltlX}{\mathsf{X}}
\newcommand{\ltlNeg}{\neg}
\newcommand{\ltlG}{\mathsf{G}}
\newcommand{\ltlF}{\mathsf{F}}
\newcommand{\SCC}{\mathsf{\mathsf{SCC}}}
\newcommand{\Reach}{\mathsf{\mathsf{Reach}}}
\newcommand{\Reachinverse}{\mathsf{\mathsf{Reach}^{-1}}}
\newcommand{\terminal}{\mathsf{\mathsf{Terminal}}}
\newcommand{\C}{\mathsf{\mathsf{DC}}}
\newcommand{\Acc}{\mathsf{\mathsf{Acc}}}
\newcommand{\NC}{\mathsf{\mathsf{N}}}
\newcommand{\AC}{\mathsf{\mathsf{A}}}
\newcommand{\winning}{\mathsf{\mathsf{Win}}}
\newcommand{\winningAux}{\mathsf{\mathsf{Win\_Aux}}}
\newcommand{\strategy}{\mathsf{\mathsf{Fb}}}
\newcommand{\strategyAux}{\mathsf{\mathsf{Fb\_Aux}}}
\newcommand{\Reachability}{\mathsf{\mathsf{Reachability}}}
\newcommand{\FReachability}{\mathsf{\mathsf{FReachability}}}
\newcommand{\Safety}{\mathsf{\mathsf{Safety}}}
\newcommand{\FSafety}{\mathsf{\mathsf{FSafety}}}
\newcommand{\InitIn}{\mathsf{\mathsf{InitIn}}}
\newcommand{\InitOut}{\mathsf{\mathsf{InitOut}}}
\newcommand{\X}{\mathit{X}}
\begin{document}
\title{Adapting Behaviors via Reactive Synthesis}

%
%
\author{Gal Amram\inst{1} \and
Suguman Bansal\inst{2} \and
Dror Fried\inst{3} \and
Lucas M. Tabajara\inst{4} \and
Moshe Y. Vardi\inst{4} \and
Gera Weiss\inst{5}}

%
\authorrunning{G. Amram et al.}
\institute{Tel-Aviv University, Israel \email{galam1483@gmail.com} \and
University of Pennsylvania, USA \email{suguman@seas.upenn.edu} \and
The Open University of Israel, Israel \email{dfried@openu.ac.il} \and
Rice University, USA \email{\{lucasmt,vardi\}@rice.edu} \and
Ben-Gurion University of the Negev, Israel \email{geraw@bgu.ac.il}
}

%
\maketitle              

\begin{abstract}
In the \emph{Adapter Design Pattern}, a programmer implements a \emph{Target} interface by constructing an \emph{Adapter} that accesses an existing \emph{Adaptee} code.  In this work, we present a reactive synthesis interpretation to the adapter design pattern, wherein an algorithm takes an \emph{Adaptee} and a \emph{Target} transducers, and the aim is to synthesize an \emph{Adapter} transducer that, when composed with the {\em Adaptee}, generates a behavior that is equivalent to the behavior of the {\em Target}. One use of such an algorithm is to synthesize controllers that achieve similar goals on different hardware platforms.  While this problem can be solved with existing synthesis algorithms, current state-of-the-art tools fail to scale.  
To cope with the computational complexity of the problem, we introduce a special form of specification format, called {\em Separated GR($k$)}, which can be solved with a scalable synthesis algorithm but still allows for a large set of realistic specifications. 
We solve the realizability and the synthesis problems for Separated GR($k$), and show how to exploit the separated nature of our specification to construct better algorithms, in terms of time complexity, than known algorithms for GR($k$) synthesis. We then describe a tool, called SGR($k$), that we have implemented based on the above approach and show, by experimental evaluation, how our tool outperforms current state-of-the-art tools on various benchmarks and test-cases.
\end{abstract}

\sloppy

\section{Introduction}
\label{sec:intro}
Inspired by the well known adapter design pattern~\cite{gamma1995design}, we study the use of reactive synthesis for generating adapters that translate inputs meant for a target transducer to inputs of an adaptee transducer. Consider, as one motivating example, the practice of adding code to an operating system that mitigates the risk posed by newly discovered hardware vulnerabilities like Spectre and Meltdown~\cite{Kocher2018spectre, Lipp2018meltdown}.
While the discovery of such vulnerabilities puts constraints on how the hardware can be used, the patch of the operating system (called adapter in this paper) takes upon itself to take care of running all applications without change~\cite{220586}. It does so by allowing applications of the existing interface, while adapting their operation in way that ensures that the system is not exposed to the new threat.

Formally, we propose the following synthesis problem: given two finite-state transducers called $\Target$ and $\Adaptee$, synthesize a finite-state transducer called $Adapter$ such that
\[    \Adaptee \circ \Adapter \eqsim \Target.\]
The symbol $\circ$ stands for standard transducer composition~\cite{lothaire2005applied} and the symbol $\eqsim$  stands for an equivalence relation, a generalization of sequential equality, which we explain below. In words, we want an $\Adapter$ that stands between an $\Adaptee$ and its inputs and guarantees, such that the composition $\Adaptee \circ \Adapter$ is equivalent to $\Target$. 
In the vulnerability patching example, $\Adaptee$ is a model of the 
constrained hardware and $\Target$ is a model of the hardware as used before the discovery of the vulnerability, without the new constraints. The $\Adapter$ that we generate models the patch that mediates between the vulnerable hardware and applications that are not aware of the vulnerability.

In our setting, an input to the synthesis algorithm is the equivalence relation  along with the specification of the adaptee and of the target.
While the problem of synthesizing an adapter such that $\Adaptee \circ \Adapter$ is sequentially equal to $\Target$ may be useful in some cases~\cite{10.5555/972695.972698}, we study here a more general problem. This is called for by applications such as the vulnerability covering patches described above. Specifically, we allow our users to specify an equivalence relation between $\Adaptee \circ \Adapter$ and $\Target$  that is not necessarily sequential equality.  In this paper, we propose to use $\omega$-regular properties~\cite{de2000concurrent} for specifying this equivalence relation, as follows. We assume, without loss of generality, that the outputs of both the $\Target$ and the $\Adaptee$ are assignments to disjoint sets of atomic propositions. We then consider sequences of pairs of such assignments that correspond to zipped runs of $\Adaptee \circ \Adapter$ and of $\Target$ over the same input. Having this set of sequences in mind, the user specifies a set of temporal  properties using an $\omega$-regular formalism such as LTL or B\"uchi automata. The transducer $\Adaptee \circ \Adapter$ is considered equivalent to $\Target$  if all the properties that the user specified are satisfied for  each sequence in the set~\cite{6280452}.  Note that the equivalence relation can be very different than sequential equality, it can, for example, say that  $\Adaptee \circ \Adapter$ must be, in a way,  a ``mirror image" of $\Target$, as demonstrated by the cleaning robots example in Section~\ref{sec:illustrated}, where $\Target$ is a robot that cleans some rooms and $\Adaptee \circ \Adapter$ is a robot that clean all the rooms that $\Target$ did not clean. 

The solution that we propose in this paper consists of two phases: we first transform the transducers to transition systems and arrive at a game structure that is more amenable for game-based techniques. Then we make use of the specific form of the resulting game and some simplifying assumptions about the form of the equivalence properties to solve the game efficiently. The game structures that we analyze consist of pairs of transition systems called  $\Input$ and $\Output$,  accompanied by a set of $\omega$-regular properties that specify equivalence relation between the two, as described above. The game that we solve is, then, to find a controller that reads the assignments to the variables of the $\Input$ and produces a valid sequence of assignments to the variables of the $\Output$ such that all the properties are satisfied. The translation of the transducers to this game structure is rather direct, as elaborated in Section~\ref{sec:transducers}. The $\Input$ transition system is generated from the $\Target$ transducer and the $\Output$ transition system is generated from the $\Adaptee$ transducer. This is because we want the $\Adapter$, which we generate from the controller as described below, to consider the behavior of the $\Target$ and to translate it to a command that generates an equivalent behaviour of $\Adaptee$. Once we find a controller that solves the game, we can transform it to an adapter we detail in Section~\ref{sec:transducers}.

The synthesis problem that we defined so far is as hard computationally as general LTL synthesis and is thus double exponential in the worst case~\cite{pnueli1989synthesis}. 
To cope with this difficulty, we propose to use a well known fragment of LTL called GR($k$). GR($k$) generalizes the GR($1$) subset of LTL~\cite{BloemJPPS12}, a practical fragment  of LTL for which a feasible reactive synthesis algorithm exists (see, e.g.,~\cite{Kress-GazitFP09, MaozR16a,OzayTMW11,OzayTM11,RyzhykW16,BloemGJPPW07}).
Furthermore, GR($k$) formulas are known to be highly expressive, as they can encode most commonly appearing $\ltl$ industrial patterns~\cite{DwyerAC99,MaozR16,MenghiTPGB19} and DBA properties (see related works for details).
In addition to using GR($k$), since the $\Input$ and $\Output$ in our model are separated transition systems, with separated sets of atomic propositions, we focus on properties that separate input and output variables. 
That is, our specification has the form $\bigwedge_{i=1}^k(\phi_i\rightarrow\psi_i)$, where the $\phi_i$ and $\psi_i$ are conjunctions of LTL $\ltlG \ltlF$ (Globally in the Future) formulas over $\Input$  variables only and $\Output$ variables only respectively. 
We call this model \textit{Separated GR($k$)}. We show through several case-studies that this fragment of LTL suffices to specify a range of useful equivalence relations. 

We study the problems of realizability and synthesis on Separated GR($k$) game.
For that, we first consider a sub-problem of solving a \emph{weak B\"uchi} game.  Then we identify and make use of a property of separated games that we call \emph{delay property}:
the system can delay its response to the environment indefinitely as long as it remains in the same connected component of the game graph.
This allows us to decide the realizability of Separated GR($k$) in $O(|\varphi|+N)$ symbolic operations, and to synthesize a controller for a realizable specification in $O(|\varphi|N)$ symbolic operations, where $\varphi$ is the Separated GR($k$) specification, and $N$ is the size of the state-space. Thus, Separated GR($k$) games are easier to solve that solving GR($k$) games which  require  $O(N^{k+1}k!)$ operations~\cite{PitermanP06}. This demonstrates the efficiency of our framework, since  $|\varphi|$ tends to be smaller than $N$ and in most practical cases, $|\varphi|\in O(log(N))$.

The benefits of the complexity-theoretic improvement are reflected in empirical evaluations on our case studies of separated GR($k$) formulas. We demonstrate that while separated GR($k$) formulas are challenging for state-of-the-art synthesis tools, a symbolic BDD-based implementation of our algorithm solves them scalably and efficiently.


The rest of the paper is organized as follows:
Section~\ref{Sec:Prelims} introduces necessary preliminaries.
Separated GR($k$) games are introduced and formulated in Section~\ref{sec:problem-def}. In Section~\ref{sec:transducers} we describe how to use Separated GR($k$) games synthesis to generate the adapter transducer, and introduce several use-cases.
Next, we turn to solving separated GR($k$) games. An overview of our solution approach and a necessary property for correctness of algorithm, called the delay property, is given in Section~\ref{sec:reduction}. A complete symbolic algorithm is presented in Section~\ref{sec:algorithmsWeakBuchi}. An empirical evaluation on case-studies is presented in Section~\ref{sec:experiments}. Finally, in Section~\ref{sec:related} and Section~\ref{sec:Discussion} respectively,  we give related work and conclude.

\section{Preliminaries}
\label{Sec:Prelims}

\subsubsection{General Definitions.}
Given a set of Boolean variables $\V$, a {\em state over $\V$} is an assignment $s$ to the variables in $\V$. We describe $s$ as the subset of $\V$ that is assigned $\mathsf{True}$ in $s$. The set of {\em primed variables of $\V$} is $\V' = \{v' \mid v \in \V\}$. Then $s'=\{v'\mid v\in s\} $ is the primed state $s'$ over $\V'$.
An {\em assertion over $\V$} is a Boolean formula over variables $\V$.
A state $s$ satisfies an assertion $\rho$ over the same variables, denoted $s \models \rho$, if $\rho$ evaluates to $\mathsf{True}$ by assigning $\mathit{true}$ to the elements of $s$.
We define the \emph{projection} of a state $s$ on a subset $\U \subseteq \V$ as denoted by $s|_\U = s\cap \U$. We extend the notion of projection to a set of states $S \subseteq 2^\V$ by defining $S|_\U = \{s|_\U \mid s \in S\}$.


 Our specification is a special form of \textit{Linear Temporal Logic ($\ltl$)}. $\ltl$~\cite{Pnueli77} extends propositional logic with infinite-horizon temporal operators. 
The syntax of an $\ltl$ formula over a finite set of Boolean variables  $\V$ is  defined as follows: 
$\varphi::= v\in\V \mid \ltlNeg \varphi \mid \varphi \land \varphi \mid \varphi \lor \varphi \mid \ltlX \varphi \mid \varphi \ltlU \varphi \mid \ltlF \varphi \mid \ltlG \varphi$.
Here $\ltlX$ (Next), $\ltlU$ (Until), $\ltlF$ (Eventually), $\ltlG$ (Always) are temporal operators.  
The semantics of $\ltl$ can be found in~\cite[Chapter~5]{MCBook}.

We model the adapters as transducers. A \textit{transducer} is a deterministic finite-state machine with no accepting states, but with additional output alphabet and an additional function from the set of states to the output alphabet. A formal definition of a transducer can be found in~\cite{fisman2009reasoning}, but is not required for this paper.

The algorithms developed in this paper are symbolic,  i.e. manipulate implicit representations of sets of states. To this end, we use \textit{Binary Decision Diagrams (BDDs)}~\cite{Akers78,Bryant86} to represent assertions. For a BDD $\mathsf{B}$ and sets of variables $\mathcal V_1,\cdots \mathcal V_n$, we write $\mathsf{B}(\mathcal V_1,\dots,\mathcal V_n)$ to denote that $\mathsf{B}$ represents an assertion over $\mathcal V_1\cup\cdots\cup \mathcal V_n$. For a state $s$ over $\mathcal V$, we write $s\models B(\mathcal V)$ to denote that the assertion that $\mathsf{B}$ represents is satisfied by the state $s$. BDDs support several \textit{symbolic operations}: conjunction $(\vee)$, disjunction $(\wedge)$, negation $(\neg)$, and extraction of variables using the $\exists$ and $\forall$ operators. We measure time complexity of a symbolic algorithm by a worst case \#symbolic-operations it performs. A discussion on a rigorous treatment of BDD operations can be found in Appendix~\ref{app:prelims}.

\subsubsection{Game Structures and Games.} 
We follow the notations of~\cite{BloemJPPS12}. A game structure $\GS=(\I,\O,\theta_\I,\theta_\O,\rho_\I,\rho_\O)$ defines a turn-based interaction between an {\em environment} and a {\em system} players.
The input variables $\I$ and output variables $\O$ are two disjoint sets of Boolean variables that are controlled by the environment and system, respectively. The environment's {\em  initial assumption} $\theta_\I$ is an assertion over $\I$, and the system's \emph{initial guarantee} $\theta_\O$ is an assertion over $\I{\cup}\O$.
The environment's \emph{safety assumption} $\rho_\I$ is an assertion over $\I{\cup}\O{\cup}\I'$, where the interpretation of  $(\inp_0,\out_0,\inp_1')\models \rho_\I$ is that from state $(\inp_0,\out_0)$ the environment can assign $\inp_1$ to the input variables. W.l.o.g, we assume that $\rho_\I$ is deadlock free, i.e., for all $(\inp_0, \out_0)$ there exists an $\inp_1$ s.t. $(\inp_0,\out_0,\inp_1')\models \rho_\I$. Similarly, the system's \emph{safety guarantee} $\rho_\O$ is an assertion over $\I{\cup}\O{\cup}\I'{\cup}\O'$, where the interpretation of    
     $(\inp_0,\out_0,\inp_1', \out_1')\models \rho_\O$ is that from state $(\inp_0,\out_0)$ when the environment assigns $\inp_1$ to the input variables, the system can assign $\out_1$ to the output variables. Again, w.l.o.g, we assume that $\rho_\O$ is deadlock free, i.e., for all $(\inp_0, \out_0, \inp_1')$ there exists an $\out_1$ s.t. $(\inp_0,\out_0,\inp_1',\out_1')\models \rho_\O$.

    
    
    
    
    


A play over $\GS$ progresses by the players taking turns to assign values to their own variables ad infinitum, where the players must satisfy the initial conditions at the start and the safety conditions thereafter.
Formally, a \emph{play} $\pi=s_0,s_1,\dots$ is an infinite sequence of states over $\I\cup\O$ such that $s_0\models \theta_\I\wedge \theta_\O$ and $(s_j,s_{j+1}')\models \rho_\I\wedge \rho_\O$  for all $j\geq 0$.
A \emph{play prefix} is either a play or a finite sequence of states that can be extended to a play.
Then a {\em strategy} is a function   $f:(2^{\I\cup\O})^+\times 2^\I\rightarrow 2^\O$ such that if $s_0,\dots,s_m$ is a play prefix, $(s_m,\inp')\models \rho_\I$ and $f(s_0,\dots,s_m,\inp)=\out$, then $(s_m,\inp',\out')\models \rho_\O$.
Intuitively, a strategy directs the system on what to assign to the output variables, depending on the history of a play 
and the most recent assignment by the environment to the input variables. 
A play prefix is said to be {\em consistent with a strategy} $f$ if for all states $s_j = (\inp_j, \out_j)$ in that prefix, $f(s_0,\dots, s_{j-1}, \inp_j) = \out_j$ for all $j \geq 0$. 
A strategy is memoryless if it only depends on the last state and the most recent assignment to the input variables. 
Formally, a {\em memoryless strategy} is a function   $f:(2^{\I\cup\O})\times 2^\I\rightarrow 2^\O$ such that if $(s_m,\inp')\models \rho_\I$ and $f(s_m,\inp')=\out$, then $(s_m,\inp',\out')\models \rho_\O$.

A {\em game} is a tuple $(\GS,\varphi)$ where $\GS$ is a game structure over inputs $\I$ and outputs $\O$ and $\varphi$ is
an LTL formula
over $\I\cup\O$ called a {\em winning condition}.
A play $\pi$ is \emph{winning} for the system if
$\pi \models \varphi$.
A strategy $f$ \emph{wins from state $s$} if every play $\pi$ from $s$ that is consistent with $f$ is winning for the system. A strategy $f$ \emph{wins from  $S$}, where $S$ is an assertion over $\I\cup\O$, if it wins from every state $s\models S$. The \emph{winning region} of the system is the set of states  from which it has a winning strategy. 
A strategy $f$ is \emph{winning} if for every state $\inp\models \theta_\I$ there exists a state $\out\in 2^\O$ such that $(\inp,\out)\models \theta_\O$ and $f$ wins from $(\inp,\out)$. 
In this paper, we have the following games that are defined over the following winning conditions.

\begin{itemize}
\item
{\em Reachability games}:  $\ltlF(\varphi)$ where $\varphi$ is an assertion over $\I\cup\O$. 

\item 
{\em Safety games}: $\ltlG(\varphi)$ where $\varphi$ is an assertion over $\I\cup\O$. 

\item{\em B\"uchi games}:
$\ltlG\ltlF(\varphi)$ where $\varphi$ is an assertion over $\I\cup\O$.

\item
{\em GR($k$) games}:  $\bigwedge_{l=1}^k (  
\bigwedge_{i=1}^{n_l} \ltlG\ltlF(\varphi_{l,i}) \rightarrow \bigwedge_{j=1}^{m_l} \ltlG\ltlF(\psi_{l,j}))$ where all $\varphi_{l,i}$ and $\psi_{l,j}$ are  assertions over $\I\cup\O$. 
\end{itemize}



Given a game $(\mathit{GS},\varphi)$,
\emph{realizability} is the problem of deciding whether a winning strategy for the system exists, and \emph{synthesis} is the problem of constructing a winning strategy if one exists.
We note that a realizability check can be reduced to the identification of the winning region, $W$: A winning strategy exists iff for all $\inp\models \theta_\I$ there exists $\out\in2^\O$ such that $(\inp,\out)\models \theta_\O$ and $(\inp,\out)\in W$. Hence, the synthesis problem can be solved by constructing a strategy that wins from $W$.

\subsubsection{Game Graphs and Weak B\"uchi Games.}

The \emph{game graph} for a game structure $\GS$ is the directed graph $(V, E)$ with vertices $V = 2^{\I \cup \O}$ and edges $E = \{(s, t) \mid (s, t') \models \rho_\I \land \rho_\O\}$. Intuitively, vertices are states over $\I$ and $\O$, and edges represent valid transitions between states according to the safety conditions. The game graph can be useful for analyzing the structural properties of a game structure via graph-theoretical properties.

A \emph{finite path} in a directed graph $(V, E)$ is a sequence $v_0, \ldots, v_n \in V^+$ such that $(v_j, v_{j+1}) \in E$ for all $0 \leq j < n$. An \emph{infinite path} $v_0, v_1, \ldots \in V^\omega$ is similarly defined. A vertex $u$ is said to be \emph{reachable} from another vertex $v$ if there is a finite path from $v$ to $u$.
A {\em strongly connected component} (SCC) of a directed graph $(V,E)$ is a maximal set of vertices within which every vertex is reachable from every other vertex.  It is well known that SCCs partition the set of vertices of a directed graph, and that the set of SCCs is partially ordered with respect to reachability. Also note that every infinite path ultimately stays in an SCC.

Let $(\GS, \ltlG \ltlF \varphi)$ be a game with a B\"uchi winning condition, and let $\Statess_0\dots,\Statess_m$ be the set of SCCs that partition the game graph of $\GS$. We say that $(\GS, \ltlG \ltlF \varphi)$ is a \emph{weak B\"uchi game} if, given the set $\F$ of states that satisfy the assertion $\varphi$, for every SCC $\Statess_i$, either $\Statess_i\subseteq \F$ or $\Statess_i\cap\F=\emptyset$.
Thus, the SCCs of a weak B\"uchi game are either {\em accepting components}, meaning all of its states are contained in $\F$, or  {\em non-accepting components}, meaning none of its states is present in $\F$. 
As a consequence, a play in a weak B\"uchi game is winning for the system if the play ultimately never exits an accepting component. Similarly, a strategy is winning for the system if it can guarantee that every play will ultimately remain inside an accepting component.

\section{Separated GR($k$) Games}
\label{sec:problem-def}

Our framework relies on the core idea of reducing the problem of adapter generation to synthesizing a \emph{Separated GR($k$) game}, which we define in this section. At a high-level, a separated GR($k$) differentiates from a regular GR($k$) game in a separation between input and output variables in both the game structure and winning condition. We show in later sections that the separation of variables leads to algorithmic benefits to the synthesis problem. Formally,  


\begin{definition} \label{def:separates-variables}
A game structure  $\mathit{\GS}=(\I,\O,\theta_\I,\theta_\O,\rho_\I,\rho_\O)$ {\em separates variables} over input variables $\I$ and output variables $\O$ if:
\begin{itemize}
    \item The environment's initial assumption $\theta_\I$ is an assertion over $\I$ only.
    \item The system's initial guarantees $\theta_\O$ is an assertion over $\O$ only.
    \item The environment's safety assumption $\rho_\I$ is an assertion over $\I\cup \I'$ only.
    \item The system's safety guarantee $\rho_\O$ is an assertion over $\O\cup \O'$ only.
\end{itemize} 
\end{definition}

The interpretation of a game structure which separates variables is that the underlying game graph $(V, E)$ is the product of two distinct directed graphs over disjoint sets of variables:  $G_\I$ over the variables $\I\cup\I'$, and $G_\O$ over the variables $\O\cup\O'$. For $\J \in \{\I, \O\}$, the vertices of $G_\J$ correspond to states over $\J$ and there is an edge between states $s$ and $t$ if $(s, t') \models \rho_\J$.

Next, the notion of separation of variables extends to games with GR($k$) winning conditions as follows:
\begin{definition}
\label{def:separatedGRk}
A GR(k) winning condition $\varphi$ over $\I\cup\O$  separates variables w.r.t. $\I$ and $\O$ if $ \varphi = \bigwedge_{l=1}^k (  
\bigwedge_{i=1}^{n_l} \ltlG\ltlF(\varphi_{l,i}) \rightarrow \bigwedge_{j=1}^{m_l} \ltlG\ltlF(\psi_{l,j})
)$ such that each $\varphi_{l,i}$ is an assertion over $\I$ and each $\psi_{l,j}$ is an assertion over $\O$. 
\end{definition}

A {\em Separated GR($k$) game} is a GR($k$) game $(\GS,\varphi)$ over $\I\cup\O$ in which both $\GS$ and $\varphi$ separate variables w.r.t. $\I$ and $\O$.



A major  observation is that in a game played over a separated game structure, the actions of the two players are independent: the environment's actions do no limit the system's actions, and vice versa. In later sections we see how this observation leads to algorithmic improvements in solving separated GR($k$) games over a regular GR($k$) game. 
Specifically, in Section~\ref{sec:transducers} we see how to use Separated GR($k$) games to generate the adapter transducer. In Sections~\ref{sec:reduction} and ~\ref{sec:algorithmsWeakBuchi} we discuss algorithms for realizability and synthesis of Separated GR($k$) games.




\section{From Transducers to Separated GR($k$)}\label{sec:transducers}

We describe, using an end-to-end-example, how adapter transducer generation can be reduced to synthesis of Separated GR($k$) games. 


We begin with user-provided $\Target$ and $\Adaptee$ transducers. These transducers model the behavior of a system that we want to use ($\Adaptee$) and the behavior of a system that we want to emulate ($\Target$). 
 For example, the transition systems in Figure~\ref{fig:transducers} formulates the following scenario. (1)  $\Target$ is a hardware with three modes that we use, such that the $U$ (up) and the $D$ (down) commands send the hardware from mode $s_0$ to modes $s_1$ and $s_2$, respectively, from which the $S$ (stay) command keeps the system looping at the chosen mode. (2)  $\Adaptee$ that is a hardware that we \textit{want} to use and also has three modes, but which does not allow the command $S$ after $U$. Instead, it allows a $D$ command that switches the mode back to $s_0$.



\begin{figure}
    \centering
\begin{tikzpicture}[->,>=stealth',shorten >=1pt,auto,node distance=1.5cm,initial text=,
  thick,node/.style={fill=blue!20,draw,
  font=\sffamily\footnotesize, circle},
  every node/.style={inner sep=1pt}
  ]

\node[initial,node, initial text=$S|\neg t_1 \wedge \neg t_0$] (00) {$s_0$};

\node[node] (01) [above of=00] {$s_1$};
\node[node] (10) [below of=00] {$s_2$};
\node [above=0.1cm of 01] {Target};

\draw [->] (00) -- node {$U \,|\, t_1 \wedge \neg t_0$} (01);
\draw [->] (00) -- node {$D \,|\, \neg t_1 \wedge t_0$} (10);
\draw [->] (01) edge [loop left] node {$S \,|\,\neg t_1 \wedge t_0$} (01);
\draw [->] (10) edge [loop left] node {$S \,|\, t_1 \wedge \neg t_0$} (10);

\node[initial,node, initial text=$S \,|\, \neg a_1 \wedge \neg a_0$] (_00) [left=5cm of 00] {$s_0$};
\node[node] (_01) [above of=_00]    {$s_1$};
\node[node] (_10) [below of=_00]    {$s_2$};
\node [above=0.1cm of _01] {Adaptee};

\draw [->] (_00) edge [bend right] node [right] {$U \,|\,\neg a_1 \wedge a_0$} (_01);
\draw [->] (_00) -- node {$D \,|\, a_1 \wedge \neg a_0$} (_10);
\draw [->] (_01) edge [bend right] node [left] {$D \,|\, \neg a_1 \wedge \neg a_0$} (_00);
\draw [->] (_10) edge [loop left] node {$S \,|\, a_1 \wedge \neg a_0$} (_10);

\end{tikzpicture}    
    \caption{An example of $\Adaptee$ and $\Target$  transducers. In this example, the $a_i$ and $t_i$ variables encode the binary representation of the mode being moved to. 
    }
    \label{fig:transducers}
\end{figure}
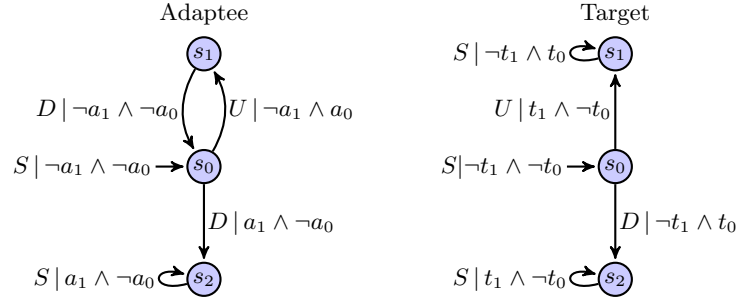

The second step is a formulation of the equivalence relation, where we define the type of emulation that we require. In our example we want to maintain the following property: if $\Target$ visits a mode $s_i$ infinitely often for a certain input sequence, then so does $\Adaptee \circ \Adapter$.  This can be expressed in LTL as: 
$$\bigwedge_{i=0}^2 \ltlG\ltlF(bin_t(s_i)) \to  \ltlG\ltlF(bin_a(s_i))$$
where $bin_t(s_i)$ denotes the binary representation of mode $s_i$ using  variables $t_1, t_0$, and similarly for $bin_a(s_i)$ using variables $a_1, a_0$.
Note that in this example we cannot just synthesize an adapter that cycles through all modes in $\Adaptee \circ \Adapter$ infinitely often, since the $\Adaptee$ transducer does not allow that.

As a third step, to generate a separated GR($k$) game,
we  translate the $\Target$  and $\Adaptee$ transducers to $\Input$ and $\Output$ transition systems as depicted, for example, in Figure~\ref{fig:transition-systems}.
Since $\Adaptee$ and $\Target$ are two separate transducers, each with its own structure, it is natural to model these as two separate transition systems on distinct variables. 
Thus, the transition systems are produced by the well known projection construction that turns an FST into a FSA that accepts the output language of the transducers~\cite{10.5555/972695.972698}. 
Note that in our setting $\Target$ is translated to $\Input$ and $\Adaptee$ is translated to $\Output$.  This may appear as a role inversion to readers.
We propose it because the role of the controller in our setting is to translate the behavior of $\Target$ to an equivalent behavior of the $\Adaptee$.

\begin{figure}
    \centering
    \begin{tikzpicture}[->,>=stealth',shorten >=1pt,auto,node distance=0.85cm,initial text=,
      thick,node/.style={fill=blue!20,draw,
      font=\sffamily\footnotesize,minimum width=10mm}]

    \node[initial,node] (00) {$\neg t_1  \wedge \neg t_0$};
    \node[node] (01) [above of=00] {$\neg t_1  \wedge t_0$};
    \node[node] (10) [below of=00] {$t_1  \wedge \neg t_0$};
    \node [above= 0.1 cm of 01] {Input};

    \draw [->] (00) --  (01);
    \draw [->] (00) --  (10);
    \draw [->] (01) edge [loop left] (01);
    \draw [->] (10) edge [loop left] (10);

    \node[initial,node] (_00) [right=2cm of 00] {$\neg a_1  \wedge \neg a_0$};
    \node[node] (_01) [above of=_00]    {$\neg a_1  \wedge a_0$};
    \node[node] (_10) [below of=_00]    {$a_1  \wedge \neg a_0$};
    \node [above= 0.1 cm of _01] {Output};

    \draw [->] (_00) edge [bend right]  (_01);
    \draw [->] (_00) --  (_10);
    \draw [->] (_01) edge [bend right]  (_00);
    \draw [->] (_10) edge [loop left] (_10);

    \end{tikzpicture}
    \caption{A direct translation of the $\Target$ transducer to an $\Input$ transition system and of the $\Adaptee$ transducer to an $\Output$ transition system.}
    \label{fig:transition-systems}
\end{figure}
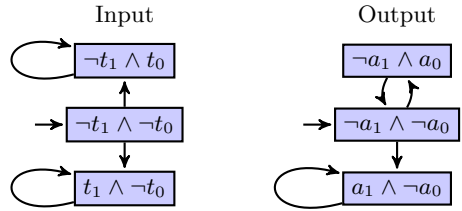

These separate transition systems, together with the specification described above, form a Separated GR($k$) that, as a fourth step, we can feed to the Separated GR($k$) synthesis algorithm.
The output of the algorithm is a transducer called $Controller$, that maps runs of $\Input$ to runs of $\Output$, as shown, in our example, in Figure~\ref{fig:controller}.  This, in fact, connects the output of the $\Target$ to the output of the $\Adaptee$.

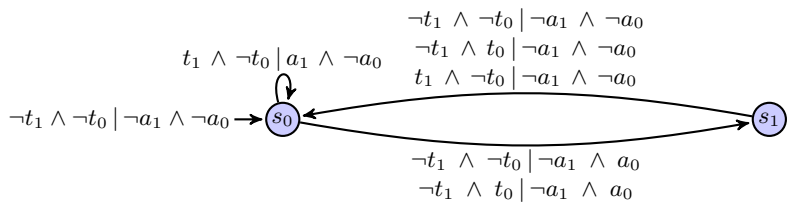
\begin{figure}
    \centering
    \begin{tikzpicture}[->,>=stealth',shorten >=1pt,auto,node distance=1cm,initial text=,
      thick,node/.style={fill=blue!20,draw,
      every text node part/.style={align=center},
      font=\sffamily\footnotesize},every node/.style={inner sep=1pt}]

    \node[initial,circle,node,initial text=$\neg t_1 \wedge \neg t_0\,|\,\neg a_1 \wedge \neg a_0$] (0) {$s_0$};
    
    \node[circle,node,minimum width=0mm] (1) [right=6cm of 0] {$s_1$};

    \draw [->] (0) edge [bend right=10] 
            node [below,text width=6cm,align=center] (A) 
            {$\neg t_1 \wedge \neg t_0 \,|\, \neg a_1 \wedge a_0$ \\ 
             $\neg t_1 \wedge t_0 \,|\, \neg a_1 \wedge a_0$}     
            (1);
    
        \draw [->] (1) edge [bend right=10] 
            node [above,text width=5cm,align=center]{$\neg t_1 \wedge \neg t_0 \,|\, \neg a_1 \wedge \neg a_0$ \\ $\neg t_1 \wedge t_0 \,|\, \neg a_1 \wedge \neg a_0$ \\ $t_1 \wedge \neg t_0 \,|\, \neg a_1 \wedge \neg a_0$} (0);
    
        \draw [->] (0) edge [loop above] 
            node [above,text width=4cm,align=center]{$t_1 \wedge \neg t_0 \,|\, a_1 \wedge \neg a_0$} (0);
    
        \end{tikzpicture}
    \caption{A controller that reads runs of the $\Input$ transition system and generates runs of the $\Output$ transition system such that the specified Separated GR($2$) formula is guaranteed to be true.}
    \label{fig:controller}
\end{figure}

As a final step, from the controller we can construct the $\Adapter$ using the formula $\Adapter = {\Adaptee}^{-1}\circ \mathit{Controller} \circ \Target$.
This means that $\Adapter$ contains an internal model of the $\Target$ and of the $\Adaptee$. These internal models are used to translate inputs to expected outputs of the adapter, then feed them to the controller, and then feed the output of the controller to the reverse of $\Adaptee$ to generate an input to $\Adaptee$ that emulates the behaviour of $\Target$.
Note that it is possible to invert transducers symbolically~\cite{10.1145/3140587.3062345}.


\subsection{Additional Usages of our Technique}\label{sec:illustrated}


We give two more examples to demonstrate  uses of Separated GR($k$). 

\subsubsection{Cleaning Robots.} This example demonstrates how one can use our technique to fulfill tasks that have not been covered by an execution of an existing transducer. Consider a cleaning robot (the $\Target$ transducer) that moves along a corridor-shaped house, from room $1$ to room $n$. The robot follows some plan and accordingly cleans some of the rooms. Our goal is to synthesize a controller that activates a second cleaning robot (the $\Adaptee$ transducer) that follows the first robot and cleans exactly those rooms left uncleaned. Each robot controls a set of variables indicating which room they are in and which rooms they have cleaned, and additionally the original robot controls a variable indicating whether it is done with its cleaning. Our controller is required to fulfill requirements of the form: $\ltlG\ltlF(\mathit{done}) \wedge \ltlG\ltlF(\mathit{!in{:}clean}_i)\rightarrow \ltlG\ltlF(\mathit{out{:}clean}_i)$, $\ltlG\ltlF(\mathit{done}) \wedge \ltlG\ltlF(\mathit{in{:}clean}_i)\rightarrow \ltlG\ltlF(\mathit{!out{:}clean}_i)$.


\subsubsection{Railway Signalling.} This example demonstrates how one can use our technique to improve the quality of an existing transducer. We consider a junction of $n$ railways, each equipped with a signal that can be turned on (light in green) or off (light in red). Some railways overlap and thus their signals cannot be turned on simultaneously. We consider an overlapping pattern where railways 1-4 overlap, and similarly 3-6, 5-8, and so on.

An existing system (the $\Target$ transducer) was programmed to be strictly safe in order to avoid accidents, so it never raises two signals simultaneously. We want to improve the system's performance by synthesizing a controller that reads the assignments that the existing transducer produces and accordingly assign values to the signals in such a way as to produce both safe and \emph{maximal} valuations: the $i$th signal is turned on if and only if the signal of every rail that overlaps with the $i$th rail is off.
Furthermore, we want to maintain liveness properties of the $\Target$ system: (1) every signal that is turned on infinitely often by the existing system must be turned on infinitely often by the new system as well, and (2) if a signal is turned on at least once every $m$ steps (where $m$ is a parameter of the specification) by the existing system, then the same holds for the new system.

Note that, in terms of the GR($k$) formula, this example is similar to the ``hardware" example that we gave; we want to emulate the $\Target$'s execution. The crux of the example lies in its $\Adaptee$. Here, unlike in the explanatory example, the $\Adaptee$ is not a given hardware, but rather a virtual component that the user introduced to improve the $\Target$ performance. In this case the $\Adaptee$ produces safe and maximal signals.

\section{Overview for Solving Separated GR($k$) Games}\label{sec:reduction}

The adapter generation framework described in Section~\ref{sec:transducers} relies on synthesizing a controller from a separated GR($k$) game. 
In this section and the next, we describe how to solve separated GR($k$) games. 
This section gives an overview of the algorithm in Section~\ref{sec:highlevel} and describes a necessary property, called the delay property, in Section~\ref{sec:delay}. The delay property is necessary to prove correctness of our synthesis algorithm. Later, Section~\ref{sec:algorithmsWeakBuchi} gives the complete algorithm and proves its correctness. 

%





\subsection{Algorithm Overview and Intuition}\label{sec:highlevel}

Following Section~\ref{sec:problem-def}, we are given a Separated GR($k$) game that consists of a game structure $\GS$ and a winning condition in a GR($k$) form $\varphi=\bigwedge_{l=1}^k \varphi_l$, where  $\varphi_l  =
\bigwedge_{i=1}^{n_l} \ltlG \ltlF(a_{l,i}) \rightarrow \bigwedge_{j=1}^{m_l} \ltlG \ltlF(g_{l,j})$. Let $G$ be the game graph of $\GS$.
Consider an infinite play $\pi$ in $\GS$. Like every infinite path on a finite graph, $\pi$ eventually stabilizes in an SCC $S$. Due to separation of variables, the game graph $G$ can be decomposed into an input graph $G_\I$ and an output graph $G_\O$. Then the projection of $S$ on the inputs is an SCC $S_\I$ in $G_\I$, and the projection of $S$ on the outputs is an SCC $S_\O$ in $G_\O$. The input side of $\pi$ converges to $S_\I$ whereas the output side $\pi$ converges to $S_\O$. 

Now, let $S$ be an SCC with projections  $S_\I$ on $G_\I$ and $S_\O$ on $G_\O$.
Then we call $S$ \textit{accepting} if for \textit{every} constraint $\varphi_l$, where $l\in\{1,\dots,k\}$, one of the following holds:
\begin{description}
\item[All guarantees hold in $S$.] For every $j\in\{1,\dots,m_l\}$, there exists $\out\in S_\O$ such that $\out\models g_{l,j}$.

\item[Some assumption cannot hold in $S$.] There exists $j\in\{1,\dots,n_l\}$ such that for all $\inp\in S_\I$, $\inp\not\models a_{l,j}$. 
\end{description}

Then from the definition of an accepting SCC we have the following: 
a strategy that makes sure that every play converges to an accepting SCC, in which all the relevant guarantee states are visited, is a winning strategy for the system in $(\GS,\varphi)$. 
To synthesize such a strategy,
we do the following: (i) synthesize a strategy $f_B$ for which every play converges to an accepting SCC; (ii) synthesize a strategy $f_\travel$ that travels within every accepting SCC, satisfying as many of the $g_{l,j}$ guarantees as possible.
(iii) construct an overall winning strategy $f$ that works as follows: the system plays $f_B$ until reaching an accepting SCC $S$, then the system switches to $f_\travel$ to satisfy as many of the $g_{l,j}$ guarantees in $S$ as possible; if the environment moves the play to a non-accepting SCC, the system can start playing $f_B$ again to reach a different accepting SCC.

The strategy $f_B$ can be found by synthesizing the weak \Buchi game  $(\GS,\ltlG \ltlF(\acc))$, where $\acc$ is the assertion that accepts exactly those states that belong to accepting SCCs (note that $(\GS,\ltlG \ltlF(\acc))$ is  a well defined  weak \Buchi game).
$f_\travel$ can be constructed by simply finding a path in $S_\O$ that satisfies the maximum number of guarantees.

A complication arises however when switching between $f_\travel$ and $f_B$, since it is conceivable that while the system is following $f_\travel$, the environment could move to a different SCC that is outside of the winning region of $f_B$.
Thus, it is not clear that we can combine these strategies to make an overall winning strategy for the system.
To show that we can indeed combine both strategies, we need the following property that we call the \textit{delay property}: if $(\inp_1, \out_1)$ is a state in the winning region of $f_B$, and $(\inp_2, \out_0)$ is a state for which there is a path in $G_\I$ from $\inp_1$ to $\inp_2$ and a path in $G_\O$ from $\out_0$ to $\out_1$, then $(\inp_2, \out_0)$ is also in the winning region of $f_B$. We formally state and prove the delay property in Section~\ref{sec:delay}. In Section~\ref{sec:algorithmsWeakBuchi} we give details of the construction of $f_B$, $f_{\travel}$ and the use of the delay property to prove correctness of the overall winning strategy $f$.

\subsection{The Delay Property}\label{sec:delay}

The delay property essentially says that if an SCC $S$ is contained in the winning region, and the environment moves from $S$ unilaterally to a different SCC $S'$, then $S'$ is also in the winning region of the system. 
In this section, we prove that the B\"uchi game $(\GS,\ltlG\ltlF(\acc))$ where $\GS=(\I,\O,\theta_\I,\theta_\O,\rho_\I,\rho_\O)$, as defined in Section~\ref{sec:highlevel}, satisfies the delay property. Throughout this section, we write $G_\I$ and $G_\O$ to denote the graphs over $2^\I$ and $2^\O$, respectively, as in Section~\ref{sec:highlevel}.
We start with the following lemma that states that the system can still win in spite of a single step delay.


\begin{lemma}\label{lem:1delay}
Let $\inp_0,\inp_1\in 2^\I$ such that  $(\inp_0,\inp'_1)\models \rho_\I$, and assume that the system can win from $(\inp_0,\out_0)$. Then the system can also win from $(\inp_1,\out_0)$.
\end{lemma}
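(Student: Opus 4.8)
The plan is to reduce the one-step-delay claim for the input side to a statement purely about the structure of the separated game, exploiting the fact that the system's moves are completely decoupled from the environment's moves. Let $f$ be a strategy that wins from $(\inp_0,\out_0)$. I want to construct a strategy $g$ that wins from $(\inp_1,\out_0)$. The idea is that $g$ will "replay" $f$ but with the input history shifted: when the real play is at $(\inp_1,\out_0)$, I imagine a virtual play sitting at $(\inp_0,\out_0)$, and whenever the environment plays some input $\inp$ in the real play, I feed that same $\inp$ to the virtual play one step later, having first fed $\inp_1$ to the virtual play at the initial step. Since $(\inp_0,\inp_1')\models\rho_\I$, the move $\inp_0\to\inp_1$ is legal in $G_\I$, so the virtual play is a genuine play prefix; and because the game structure separates variables, the legality of the environment's moves in the real play (which only depends on consecutive inputs via $\rho_\I$) guarantees that the shifted sequence of inputs is legal in the virtual play as well.

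Concretely, I would argue as follows. First I would set up the correspondence: given a real play prefix $(\inp_1,\out_0),(\inp_2,\out_1),(\inp_3,\out_2),\dots,(\inp_m,\out_{m-1})$ consistent with $g$, associate the virtual prefix $(\inp_0,\out_0),(\inp_1,\out_1),(\inp_2,\out_2),\dots,(\inp_m,\out_m)$ — same outputs, inputs shifted back by one with $\inp_0$ prepended. I would check this is a legal play prefix in $\GS$: the output transitions $\out_{j-1}\to\out_j$ are exactly the ones taken in the real play (legal by $\rho_\O$, using separation so $\rho_\O$ only constrains $\O\cup\O'$); the input transitions are $\inp_0\to\inp_1\to\inp_2\to\cdots$, where $\inp_0\to\inp_1$ is legal by hypothesis and each later $\inp_{j}\to\inp_{j+1}$ is legal because it was legal in the real play (again by separation, $\rho_\I$ only sees $\I\cup\I'$). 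Then $g$ is defined by $g(\text{real prefix},\inp) = f(\text{virtual prefix},\inp)$; separation of variables is exactly what makes this well-defined, i.e. the value $f$ returns is a legal output move in the real play because "legal output move" depends only on the output component, which the virtual and real plays share. Finally, any infinite play consistent with $g$ corresponds to an infinite play consistent with $f$ from $(\inp_0,\out_0)$, which is winning; and since the winning condition $\varphi$ also separates variables, its truth on the play depends on the input side and output side separately — the output side is literally identical in the two plays, and the input side of the real play is the input side of the virtual play with its first state deleted, which does not affect any $\ltlG\ltlF$ formula. Hence the real play is winning, so $g$ wins from $(\inp_1,\out_0)$.

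The main obstacle, and the place to be careful, is making precise that "deleting (or prepending) one state from the input projection does not change satisfaction of the winning condition." This is true because every conjunct of a GR($k$) formula is a Boolean combination of $\ltlG\ltlF$ subformulas, and $\pi\models\ltlG\ltlF\theta$ iff $\pi^1\models\ltlG\ltlF\theta$ — the suffix-invariance of $\ltlG\ltlF$. One must also make sure the bookkeeping of the shift is stated cleanly enough that the argument works for strategies with memory (not just memoryless ones), since Lemma~\ref{lem:1delay} is stated for general winning strategies; the replay construction above handles this automatically because it is defined on play prefixes. A minor point to verify along the way is that the virtual play prefix is non-empty and starts at $(\inp_0,\out_0)$ so that consistency with $f$ makes sense, and that deadlock-freeness of $\rho_\I$ and $\rho_\O$ is never actually needed here since all the moves we use are inherited from already-legal plays.
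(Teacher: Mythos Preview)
Your replay-with-shifted-inputs construction is exactly the paper's $f_d$, and your use of separation to justify that the replayed output is a legal system move matches the paper's reasoning. One bookkeeping slip: as written, your virtual prefix ends in $(\inp_m,\out_m)$, but $\out_m$ is precisely the value $g$ is about to return, so the definition is circular. The paper's version keeps the virtual prefix the \emph{same} length as the real one and feeds $f$ the \emph{delayed} input $\inp_m$ rather than the current $\inp_{m+1}$, i.e.\ $g((\inp_1,\out_0),\ldots,(\inp_m,\out_{m-1}),\inp_{m+1}) = f((\inp_0,\out_0),\ldots,(\inp_{m-1},\out_{m-1}),\inp_m)$.

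The substantive difference is the final step. You conclude the real play is winning because the GR($k$) condition $\varphi$ separates variables and $\ltlG\ltlF$ is tail-invariant: identical output projections and input projections differing by one prepended state give the same truth value for every subformula. That is correct and arguably cleaner---for the separated GR($k$) game. But in context the lemma is stated for the weak B\"uchi game $(\GS,\ltlG\ltlF(\acc))$, and $\acc$ does \emph{not} separate variables (it is an assertion over $\I\cup\O$ defined through the SCC structure of the product graph). The paper therefore argues differently: both plays eventually remain in SCCs whose $\I$- and $\O$-projections coincide, and whether an SCC is accepting depends only on those projections. Your route proves the delay property for $(\GS,\varphi)$ directly; the paper's route proves it for $(\GS,\ltlG\ltlF(\acc))$, which is what the downstream Lemma~\ref{lem:win} actually invokes---it reasons about the weak B\"uchi winning region before that region has been identified with the GR($k$) winning region. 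Your argument is not wrong, but slotting it into the paper's overall structure would require reworking that later proof.
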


\begin{proof}
Let $f$ be a winning strategy for the system from $(\inp_0,\out_0)$. 
We construct a winning strategy $f_d$ from $(\inp_1,\out_0)$. Intuitively, $f_d$ acts from state $(\inp_1, \out_0)$ as if it were following $f$ from state $(\inp_0, \out_0)$, with a delay of a single step: the input in the current step is used to choose the output in the next step.


We use $f$ to define $f_d$ inductively over play prefixes of length $m\geq 1$, by setting $f_d((\inp_1, \out_0), \ldots, (\inp_m, \out_{m-1}), \inp_{m+1}) = f((\inp_0, \out_0), \ldots, (\inp_{m-1}, \out_{m-1}), \inp_m)$.
Note that $f_d$ is well defined since $\GS$ separates variables: from state $(\inp, \out)$, the outputs that can be chosen for the successor state depend only on $\out$, and not on $\inp$.
Note that by this definition, for every play $(\inp_1,\out_0),(\inp_2,\out_1),\dots, (\inp_{m+1},\out_m),\dots$ consistent with $f_d$, the play $(\inp_0,\out_0),(\inp_1,\out_1),\dots, (\inp_m,\out_m),\dots$ is consistent with $f$.
We remark that we define $f_d$  only for proving the lemma, and it is \emph{not} part of our solution.

Next, we show that $f_d$ is winning from $(\inp_1,\out_0)$.
Take a play $(\inp_1,\out_0),(\inp_2,\out_1),\dots$, consistent with $f_d$. By the construction, $(\inp_0,\out_0),(\inp_1,\out_1),\dots$ is consistent with $f$. Since this is a play on a weak \Buchi game, after some point it must remain in a single SCC $S$, say from state $(\inp_j,\out_j)$. Since $f$ is a winning strategy, the SCC $S$ must be accepting. 
Then $\out_j,\out_{j+1},\dots$ is an infinite path in the SCC $S|_\O$, and $\inp_j,\inp_{j+1},\dots$ is an infinite path in the SCC $S|_\I$. 
Consequently, $(\inp_1,\out_0),(\inp_2,\out_1),\dots $ converges to an SCC $\hat{S}$ in which $\hat{S}|_\I = S|_\I$ and $\hat{S}|_\O = S|_\O$. Since the conditions for an SCC $D$ to be accepting depend only on the relation between $D|_\I$ and $D|_\O$, we have that $\hat{S}$ is accepting since $S$ is accepting as well.\qed
\end{proof}

We can now prove the delay property, following by straightforward induction from Lemma~\ref{lem:1delay}.

\begin{theorem}[Delay Property Theorem]\label{thm:delay}
Let $\inp_0,\dots,\inp_n\in (2^\I)^+$ be a path in $G_\I$, and for $m\geq 0$, let $\out_{-m},\dots,\out_0\in (2^\O)^+$ be a path in $G_\O$. Assume that the system can win from $(\inp_0,\out_0)$. Then the system can also win from $(\inp_n,\out_{-m})$.
\end{theorem}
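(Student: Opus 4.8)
The plan is to derive Theorem~\ref{thm:delay} from Lemma~\ref{lem:1delay} by a double induction, first removing the ``input delay'' one step at a time and then the ``output delay'' one step at a time. Concretely, I would first prove the special case $m=0$: if $\inp_0,\dots,\inp_n$ is a path in $G_\I$ and the system wins from $(\inp_0,\out_0)$, then the system wins from $(\inp_n,\out_0)$. This is a straightforward induction on $n$. The base case $n=0$ is the hypothesis. For the step, assume the system wins from $(\inp_{n-1},\out_0)$; since $(\inp_{n-1},\inp_n)$ is an edge in $G_\I$, by definition $(\inp_{n-1},\inp_n')\models\rho_\I$, so Lemma~\ref{lem:1delay} (instantiated with $\inp_0:=\inp_{n-1}$, $\inp_1:=\inp_n$, $\out_0:=\out_0$) yields that the system wins from $(\inp_n,\out_0)$.

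Next I would handle the output delay. Here the key observation is a symmetric analogue of Lemma~\ref{lem:1delay} for the output player: if $\out_{-1},\out_0\in 2^\O$ with $(\out_{-1},\out_0')\models\rho_\O$ and the system wins from $(\inp,\out_0)$, then the system wins from $(\inp,\out_{-1})$. I would argue this either by repeating the delay-strategy construction on the output coordinate (symmetric to the proof of Lemma~\ref{lem:1delay}, again using separation of variables so that the strategy remains well defined), or — more economically — by noting that winning in $(\GS,\ltlG\ltlF(\acc))$ depends only on which accepting SCC a play stabilizes in, and prepending $\out_{-1}$ before $\out_0$ does not change the SCC $S|_\O$ that the output side converges to, nor $S|_\I$, hence does not change whether the limit SCC is accepting. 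Then, starting from ``the system wins from $(\inp_n,\out_0)$'' established above, I would induct on $m$, peeling off $\out_{-1},\out_{-2},\dots,\out_{-m}$ along the given path in $G_\O$, each step invoking this output-side one-step delay lemma.

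Combining the two inductions gives the theorem: from winning at $(\inp_0,\out_0)$ we move to $(\inp_n,\out_0)$ via the input induction and then to $(\inp_n,\out_{-m})$ via the output induction. I expect the only real subtlety — the ``main obstacle'' — is justifying the output-side delay, since Lemma~\ref{lem:1delay} is stated only for the input player; the cleanest fix is the SCC-invariance argument, which reuses exactly the reasoning already present in the proof of Lemma~\ref{lem:1delay} (an $f_d$-consistent play converges to an SCC $\hat S$ with $\hat S|_\I=S|_\I$ and $\hat S|_\O=S|_\O$, and acceptance of an SCC depends only on the pair $(\cdot|_\I,\cdot|_\O)$). Everything else is routine induction, so I would keep the write-up short, stating the two one-step lemmas (the input one being Lemma~\ref{lem:1delay}) and then the two nested inductions.
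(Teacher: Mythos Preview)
Your approach is correct and matches the paper on the input side: iterating Lemma~\ref{lem:1delay} along the path $\inp_0,\dots,\inp_n$ to conclude that the system wins from $(\inp,\out_0)$ for every $\inp$ reachable from $\inp_0$ in $G_\I$. The organization differs on the output side. You state and iterate a separate one-step output-delay lemma; the paper instead handles the output catch-up in one stroke and \emph{reduces it back to the input-side induction}: from $(\inp_n,\out_{-m})$ the system ignores the inputs and plays the given output path $\out_{-m+1},\dots,\out_0$; after $m$ steps the state is $(\inp_{n+m},\out_0)$ for some $\inp_{n+m}$ reachable from $\inp_0$ in $G_\I$, so the already-established input-side induction shows this position is winning, hence so is $(\inp_n,\out_{-m})$.

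The difference is mostly packaging, but the paper's version makes the thing you flag as the ``main obstacle'' disappear: output delay is not an independent difficulty, because catching up on outputs just creates additional input delay, which Lemma~\ref{lem:1delay} already absorbs. If you try to prove your output-side lemma cleanly you will see this: from $(\inp,\out_{-1})$ the system plays $\out_0$ regardless of the environment's move $\inp_1$, lands at $(\inp_1,\out_0)$, and then needs Lemma~\ref{lem:1delay} (not a truly symmetric construction) to conclude. Your SCC-invariance sketch is also a bit loose as stated---prepending a state to a fixed play does not by itself give a strategy against all environment choices---but it can be made rigorous along the same lines. Either way your plan goes through; the paper's route is just shorter.
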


\begin{proof}
From $(\inp_n,\out_{-m})$, the system can simply ignore the inputs and follow the path in $G_\O$ to $\out_0$. Let $(\inp_{n+m}, \out_0)$ be the state at that point in some play. Note that there is a path between $\inp_n$ and $\inp_{n+m}$, and therefore there is a path between $\inp_0$ and $\inp_{n+m}$. If the system can win from $(\inp_0, \out_0)$ then by using  Lemma~\ref{lem:1delay} in the induction steps, the system can win by induction from $(\inp, \out_0)$ for all $\inp$ such that there is a path in between $\inp_0$ and $\inp$. Therefore, the system can win from $(\inp_{n+m}, \out_0)$, and by consequence from $(\inp_n,\out_{-m})$.\qed
\end{proof}

A corollary of Theorem~\ref{thm:delay} is the following statement about the structure of the winning region of the weak \Buchi game $B=(\GS,\ltlG\ltlF (\acc))$ as defined in Section~\ref{sec:highlevel}.

\begin{corollary}\label{cor:delay}
The winning region of $B$ is a union of SCCs.
\end{corollary}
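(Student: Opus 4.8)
The plan is to show that if any state $(\inp,\out)$ of an SCC $S$ (in the full game graph $G$) lies in the winning region of $B$, then every state of $S$ lies in the winning region of $B$; since the SCCs partition $V$, this immediately gives that the winning region is a union of SCCs.

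First I would fix an SCC $S$ of the game graph $G$ and suppose $(\inp_0,\out_0)\in S$ is winning. Take an arbitrary other state $(\inp_n,\out_{-m})\in S$. Because $S$ is strongly connected, there is a finite path in $G$ from $(\inp_0,\out_0)$ to $(\inp_n,\out_{-m})$. Projecting this path onto the input coordinates gives, by the separation of variables (Definition~\ref{def:separates-variables} and the product decomposition $G = G_\I \times G_\O$ noted after it), a path in $G_\I$ from $\inp_0$ to $\inp_n$; projecting onto the output coordinates gives a path in $G_\O$ from $\out_0$ to $\out_{-m}$. The latter path, read backwards, is \emph{not} necessarily a path in $G_\O$, so instead I would observe that what Theorem~\ref{thm:delay} actually requires is a path in $G_\O$ from $\out_{-m}$ to $\out_0$, together with a path in $G_\I$ from $\inp_0$ to $\inp_n$, plus that the system wins from $(\inp_0,\out_0)$. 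Since $S$ is strongly connected, there is also a path in $G$ from $(\inp_n,\out_{-m})$ back to $(\inp_0,\out_0)$, whose output-projection is exactly a path in $G_\O$ from $\out_{-m}$ to $\out_0$. So both hypotheses of Theorem~\ref{thm:delay} are met (with the roles set up so that $n$ and $m$ there match our path lengths), and the theorem yields that the system wins from $(\inp_n,\out_{-m})$, i.e.\ $(\inp_n,\out_{-m})$ is in the winning region of $B$.

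Since $(\inp_n,\out_{-m})$ was an arbitrary element of $S$, the whole SCC $S$ lies in the winning region. As the SCCs of $G$ partition $V$, the winning region is the union of those SCCs that it meets, which is what we wanted. I would present this in three short steps: (1) reduce to ``one winning state in $S$ implies all of $S$ winning''; (2) extract the needed input-path and output-path from strong connectivity of $S$ using the product structure of $G$; (3) invoke Theorem~\ref{thm:delay}.

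The main obstacle is purely bookkeeping about directions: the delay property moves the input coordinate \emph{forward} along a $G_\I$-path and the output coordinate \emph{backward} along a $G_\O$-path, so one must be careful to feed Theorem~\ref{thm:delay} a genuine forward path in $G_\O$ ending at $\out_0$, which is why I use the return path inside the SCC $S$ rather than reversing the outgoing path. Once that orientation is handled correctly the argument is immediate, and no new machinery beyond Theorem~\ref{thm:delay} and the product decomposition of $G$ is needed.
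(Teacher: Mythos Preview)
Your proposal is correct and follows essentially the same approach as the paper's proof: both reduce to showing that one winning state in an SCC forces the whole SCC to be winning, obtain the required $G_\I$-path and $G_\O$-path from strong connectivity of $S$ via the product decomposition, and invoke Theorem~\ref{thm:delay}. Your version is slightly more explicit in that you use the forward path in $S$ to extract the input-path and the return path in $S$ to extract the output-path, whereas the paper simply asserts the paths exist in $S|_\I$ and $S|_\O$; the underlying reasoning is identical.
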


\begin{proof}
Let $(\inp,\out)$ be a state in the winning region of $B$, let $(\hat\inp,\hat\out)$ be a state in the same SCC $S$ of $(\inp,\out)$, and let $S|_\I$ and $S|_\O$ be the projections of $S$ on $G_\I$ and $G_\O$, respectively.
Then there is a path $\inp_0,\ldots, \inp_n$ for some $n\geq 0$ in $S|_\I$ such that $\inp_0=\inp$ and $\hat\inp=\inp_n$. Similarly, there is a path $\out_{-m},\ldots, \out_0$ for some $m\geq 0$ in $S|_\O$ such that $\hat\out_0=\out$ and $\hat\out=\out_{-m}$. Then by the delay property of Theorem~\ref{thm:delay}, the vertex $(\hat\inp,\hat\out)=(\inp_n,\out_{-m})$ is also in the winning region of $B$. \qed
\end{proof}

We use Theorem~\ref{thm:delay} and Corollary~\ref{cor:delay} in the proof of correctness of the overall winning strategy $f$, as described in Section~\ref{sec:symsol}.

\section{Algorithms for Solving Separated GR($k$) Games}
\label{sec:algorithmsWeakBuchi}

In this section we provide the exact details of our synthesis algorithm for Separated GR($k$) games, as described in Section~\ref{sec:highlevel}.
Since constructing $f_B$ involves defining and solving a weak \Buchi game 
, we first describe these in Section~\ref{sec:accepting-SCCs}. We remark that our weak \Buchi game synthesis algorithm works for all weak \Buchi games, and not just for the special weak \Buchi game defined in Section~\ref{sec:highlevel}. Specifically, it works even when the underlying game structure does not separates variables. Next, in Section~\ref{sec:symsol}, we complete the algorithm construction and describe the correctness of our overall synthesis algorithm. Full proofs of the theorems in this section appear in Appendix~\ref{app:StratProofs}.


\subsection{Realizability and Synthesis for Weak \Buchi Games}
\label{sec:accepting-SCCs}

We present a symbolic algorithm to solve  synthesis of a weak \Buchi game. 
When represented in explicit state-representation, weak B\"uchi games are known to be solved in linear-time in the size of the game~\cite{Chatterjee08,LodingT00}. In this section, we adapt the algorithm from~\cite{Chatterjee08,LodingT00} to symbolic state-space representation. 
For sake of exposition, we give an overview of the algorithm and then present our symbolic modification.

\subsubsection{Overview}

Given a weak B\"uchi game, recall that each SCC in its game graph $G$ is either an accepting SCC or a non-accepting SCC.
The goal is to find the winning regions in the weak B\"uchi game. This can be done by backward induction on the topological ordering of the SCCs as follows. 
Let $(\Statess_0, \dots \Statess_m)$ be a topological sort of the SCCs in $G$. 

\textbf{Base Case:}
Consider all {\em terminal partitions}, say $\Statess_j,\dots,\Statess_m$; that is, every SCC  from which no other SCC is reachable.
In this case, plays beginning in a terminal SCC will never leave it. Therefore, all states of terminal SCCs that are accepting are in the winning region of the system and all states of terminal SCCs that are non-accepting are not in the winning region of the environment. 


\textbf{Induction Step:} Let $\vec{S}=(S_{i+1},\dots,S_m)$, and
suppose that the set $\bigcup \vec{S}$ has been classified into winning regions  for the system $W_{i+1}^s$ and the environment $W_{i+1}^e$, respectively.
 Let $\vec{S}_\mathit{new}=(\Statess_j,\Statess_{j+1}, \ldots, \Statess_{i})$ be the SCCs from which all edges leaving the SCC lead to an SCC in $\vec{S}$. 
 Further, let $A$ and $N$ be the unions of all accepting SCCs and all non-accepting SCCs in $\vec{S}_{new}$, respectively. 
 Then the basic idea is as follows: The system can win from $s\in N$ if and only if it can force $\ltlF(W_{i+1}^s)$ from $s$. Analogously, the system can win from $s\in A$ if and only if it can force  $\ltlG(A\cup W_{i+1}^s)$ from $s$. Hence, by solving these reachability and safety games, we can update $W_{i+1}^s$ and $W_{i+1}^e$ into $W_{j}^s$ and $W_j^s$ that partition the larger set  $\bigcup(S_j,\dots,S_m)$ into winning regions for the system and the environment. 
 The winning strategy can be constructed in a standard way as a side-product of the reachability and safety games in each step, see for example~\cite{ZhuTLPV17,ZhuTLPV17b}.

\subsubsection{Symbolic Algorithm for Weak \Buchi Games.} 

Given a weak \Buchi game $B = ((\I, \O, \theta_\I, \theta_\O, \rho_\I, \rho_\O), \ltlG \ltlF (\acc))$
with BDDs representing $\theta_\I$, $\theta_\O$, $\rho_\I$, $\rho_\O$ and $\acc$, our goal
is to compute a BDD for the winning region and to synthesize a memoryless winning strategy for the system. 
The construction follows a fixed-point computation that adapts the inductive procedure described in the overview: In the basis of the fixed point computation, the winning region is the set of accepting terminal SCCs; in the inductive step, the winning region includes winning states by examining SCCs that are higher in the topological ordering on SCCs.
In what follows we describe a sequence of BDDs that we construct towards constructing the overall BDD for the winning region.
We use the notation $\X$ 
to denote a set of variables over $\I\cup\O$. 
For the sake of the current construction, memoryless strategies are given in the form of BDDs over $\X,\X'$, for further details on the BDDs constructions refer to Appendix~\ref{app:prelims}.


\paragraph{BDD constructions.}\label{par:BDD-constructions} We start by constructing a BDD for a predicate that indicates whether two states in a game structure are present in the same SCC. Let predicate $\Reach(s,t')$ hold if there is a path from state $s$ over $\I\cup\O$ to state $t$ over $\I\cup\O$ in the game structure $\GS$.
Similarly, a predicate $\Reachinverse(s,t')$ holds if and only if  $\Reach(t,s')$ holds.
BDDs for   $\Reach$  and $\Reachinverse$  can be computed in $O(N)$ symbolic operations using the transition relation of the game structure.
Then, a BDD indicating if two states share the same SCC, is constructed in  $O(N)$ symbolic operations by    $\SCC(\X,\X') := \Reach(\X,\X') \wedge \Reachinverse(\X,\X')$.

Next, we construct a BDD for the union of the terminal SCCs, required by the basis of induction for the construction of the winning region. Let predicate $\terminal (s)$ hold if state $s$ over $\I\cup\O$ is present in a terminal SCC. Then
$\terminal(\X) := \forall \X': \Reach(\X,\X')\rightarrow \SCC(\X,\X')$.
Therefore, given BDDs for $\mathsf{Reach}$ and $\mathsf{SCC}$, the construction of $\mathsf{Terminal}$ requires $O(1)$ symbolic operations.

\paragraph{Computing the Winning Region.}

We now describe the fixed-point computation to construct a BDD for the winning region in  a weak \Buchi game. 
Let $\Reachability_{(M,N)}(\X)$ denote a BDD generated by solving a reachability game that takes as input a set of source states $M$ and target states $N$ and outputs those states in $M$ from which the system can guarantee to move into $N$.
Similarly, let $\Safety_{(M,N)}(\X)$ denote a BDD generated by solving a safety game that takes as input a set of source states $M$ and target states $N$ and outputs those states in $M$ from which the system can guarantee that all plays remain inside the set $N$.
These constructions are standard, details can be found in~\cite[Chapter~2]{2001automata}.

Now, let  $\winning(s)$ denote that state $s$ over $\I\cup\O$ is in the winning region. Then, $\winning(\X)$ is the fixed point of the BDD $\winningAux$ defined below, where the construction essentially follows the high-level algorithm description. The BDD $\Acc(\X)$ represents the formula $acc$ encoding the set of accepting states. In addition, $\C^i(\X)$  is the union $\bigcup \vec{S}$ of the Downward-Closed set of SCCs, i.e. the SCCs that have already been classified into winning or not-winning, and $\C^i_{new}(\X)$
is the union $\bigcup \vec{S}_{new}$ of the SCCs in $\C^i(\X)$ that were not in $\C^{i-1}(\X)$. Finally, $\NC^i(\X)$ is the subset $N$ of non-accepting states in $\C^i_{new}(\X)$, and $\AC^i(\X)$ is the subset $A$ of accepting states in $\C^i_{new}(\X)$. We then define $\winningAux$ as follows.
\begin{description}
    \item{\textbf{Base Case.}} $\winningAux^0(\X) := \terminal(\X) \wedge \Acc(X)$ and
    $\C^0(\X) := \terminal(\X)$
    
    \item {\textbf{Inductive Step.}}
    \begin{align*}
    \C^{i+1}(\X) := & \forall \X': \Reach(\X,\X') \rightarrow (\SCC(\X,\X') \vee \C^i(X'))\\
    \C^{i+1}_{new}(\X) := & \C^{i+1}(\X) \setminus \C^i(\X) \\
    \NC^{i+1}(X):= & \C^{i+1}_{new}(X) \wedge \neg\Acc(X)\\
    \AC^{i+1}(X):= & \C^{i+1}_{new}(X) \wedge \Acc(X)
    \end{align*}
    \begin{align*}
    \winningAux^{i+1}(\X) & :=        \winningAux^i(\X) \\
    & \vee \Reachability_{(\NC^{i+1}(X), \winningAux^i(\X))}(X) \\
    &  \vee  
    \Safety_{(\AC^{i+1}(X), \AC^{i+1}(X)\vee\winningAux^i(\X))}(X)
    \end{align*}
  
\end{description}

To explain the construction of $\winning$, note that a state $s$ in $\C^{i+1}(\X)$ is winning in one of these cases:
(i) $s$ is a winning state in $\C^i(\X)$.  
(ii) $s$ is a non-accepting state in $\C^{i+1}(\X)$ from which the system can force the play into a winning state in $\C^i(\X)$. This set of states can be obtained from $\Reachability_{(\NC^{i+1}(X), \winningAux^i(\X))}(X)$. 
(iii) $s$ is an accepting state in $\C^{i+1}(\X)$ from which the system can guarantee that every play that leaves the accepting SCC moves into a winning state in $\C^i(\X)$. This set of states can be obtained from $\Safety_{(\AC^{i+1}(X), \AC^{i+1}(X)\vee\winningAux^i(\X))}(X)$. 

Finally, to check realizability,  construct the BDD $\forall \I ( \InitIn(\I) \rightarrow \exists \O ( \InitOut(\O) \land \winning(\I \cup \O)))$, where $\InitIn(\I)$ and  $\InitOut(\O)$ are BDDs representing $\theta_\I$ and $\theta_\O$, respectively. This BDD is equal to $true$ iff $B$ is realizable.

The fixed-point computation can be extended in a standard way to also compute a BDD representation $\strategy(X, X')$ of the winning strategy $f_B$, such that $(s, (i', o')) \models \strategy(X, X')$ iff $f_B(s, i) = o$. See Appendix~\ref{app:symbolicweakbuchi} for details. We then have the following theorem that follows our construction.

\begin{theorem}\label{thm:weakbuchi}
Realizability and synthesis for weak \Buchi games can be done in $O(N)$ symbolic steps.
\end{theorem}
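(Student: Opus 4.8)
The plan is to bill the cost in three parts: the one-time preprocessing of auxiliary BDDs, the bookkeeping inside the main fixed-point loop, and the calls to the reachability and safety sub-routines; and to show each part is $O(N)$, where $N$ denotes the size of the game graph $G$ (the same quantity for which $\Reach$ and $\Reachinverse$ were constructed in $O(N)$ symbolic operations). Preprocessing builds $\Reach$ and $\Reachinverse$ in $O(N)$ symbolic operations (iterated relational composition, one composition per unit of graph diameter, and the diameter is at most $N$); then $\SCC := \Reach\wedge\Reachinverse$ and $\terminal(\X):=\forall\X'.\,\Reach(\X,\X')\rightarrow\SCC(\X,\X')$ cost $O(1)$ further operations each. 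So preprocessing is $O(N)$.

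Next I would bound the main loop. The sets $\C^i$ grow monotonically, and $\C^{i+1}$ adds exactly the SCCs all of whose outgoing edges already point into $\C^i$; hence the loop performs at most $L$ iterations, where $L$ is the length of a longest chain in the SCC DAG of $G$, so $L$ is at most the number of SCCs, which is at most $N$. In each iteration, the BDDs $\C^{i+1}$, $\C^{i+1}_{new}$, $\NC^{i+1}$, $\AC^{i+1}$, and the disjunction defining $\winningAux^{i+1}$ are each obtained with $O(1)$ symbolic operations from $\Reach$, $\SCC$, $\Acc$, and the previous iterates; the base case $\winningAux^0:=\terminal\wedge\Acc$, $\C^0:=\terminal$ is also $O(1)$. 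Summed over the loop this is $O(L)=O(N)$, not counting the sub-game calls.

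The main obstacle is bounding the total cost of the calls $\Reachability_{(\NC^{i+1},\winningAux^i)}$ and $\Safety_{(\AC^{i+1},\AC^{i+1}\vee\winningAux^i)}$, since a single such call, solved by the usual least (resp.\ greatest) fixed point, can take $\Omega(N)$ symbolic operations, and a term-by-term count only yields $O(N^2)$. I would recover $O(N)$ by the amortization underlying the explicit-state linear-time algorithm of~\cite{Chatterjee08,LodingT00}. First, note $\C^i$ is closed under $E$ (no edge leaves it), by induction on $i$. Second, a straightforward induction on $i$ shows that from every state of $\winningAux^i$ the composed strategy keeps the play inside $\winningAux^i$ forever: the winning region of each reachability/safety sub-game is contained in $\winningAux^{i+1}$ and is closed under the respective attractor/safety strategy, and plays reaching $\winningAux^{i-1},\dots$ are handled recursively. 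Consequently the greatest fixed point computing $\Safety_{(\AC^{i+1},\AC^{i+1}\vee\winningAux^i)}$ never removes a state of $\winningAux^i$ — so it removes only states of $\AC^{i+1}$ and stabilizes in at most $|\AC^{i+1}|+1$ iterations — and the least fixed point computing $\Reachability_{(\NC^{i+1},\winningAux^i)}$ only adds states of $\NC^{i+1}$ and stabilizes in at most $|\NC^{i+1}|+1$ iterations, each iteration being $O(1)$ symbolic operations. Since the fresh layers $\C^{i+1}_{new}$ are pairwise disjoint and $\NC^{i+1},\AC^{i+1}\subseteq\C^{i+1}_{new}$, we get $\sum_i\bigl(|\NC^{i+1}|+|\AC^{i+1}|\bigr)\le N$, so all sub-game calls together cost $\sum_i\bigl(|\NC^{i+1}|+|\AC^{i+1}|+2\bigr)\le N+2L=O(N)$.

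Finally, the realizability test is the single BDD $\forall\I\bigl(\InitIn(\I)\rightarrow\exists\O(\InitOut(\O)\wedge\winning(\I\cup\O))\bigr)$, computed in $O(1)$ operations from $\winning$. For synthesis, the memoryless strategy BDD $\strategy(\X,\X')$ is accumulated alongside $\winningAux$ using the standard extraction of a strategy as a by-product of each reachability and safety game (the ranks/attractor layers already being materialized), which adds only $O(1)$ symbolic operations per fixed-point iteration and hence does not change the bound. Adding the parts: preprocessing $O(N)$, loop bookkeeping $O(N)$, sub-games $O(N)$, realizability $O(1)$, giving $O(N)$ symbolic steps in total. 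I expect the delicate point to be the second induction above, establishing that $\winningAux^i$ is invariant under the composed strategy so that the sub-game fixed points stay confined to the fresh layer; everything else is routine accounting.
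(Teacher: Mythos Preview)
Your approach mirrors the paper's: $O(N)$ preprocessing, then amortize the reachability and safety sub-games by the size of each fresh layer $\C^{i+1}_{new}$ so that the total is $O(N)$. There is one gap worth naming. Your invariance claim --- that the composed strategy keeps plays starting in $\winningAux^i$ inside $\winningAux^i$ forever --- is the \emph{soundness} direction, and it indeed suffices to bound the safety sub-game (no state of $\winningAux^i$ is ever removed from the greatest fixed point). But it does \emph{not} bound the reachability sub-game: to argue that the attractor to $\winningAux^i$ never picks up a state of $\C^i\setminus\winningAux^i$, you need the dual \emph{completeness} direction, namely that from every such state the environment has a strategy to keep the play out of $\winningAux^i$ forever. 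The paper proves this separately (as the completeness half of its correctness lemma) and invokes it explicitly for the reachability bound; your ``$\C^i$ is closed under $E$'' observation is not a substitute, since it only says plays cannot leave $\C^i$, not that the system cannot reach $\winningAux^i$ from within $\C^i$. A minor side note: the attractor computation may also add states of $\AC^{i+1}$ en route, so the per-layer bound is $|\C^{i+1}_{new}|$ rather than $|\NC^{i+1}|$; this does not affect the final $O(N)$.
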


\paragraph{Proof Outline.}
The proposed construction symbolically implements the inductive procedure of the explicit algorithm. Hence, it correctly identifies the system's winning region. 
It remains to show that the algorithm performs $O(N)$ symbolic operations. First of all, the constructions of $\SCC$ and $\terminal$ take $O(N)$ symbolic operations collectively. It suffices to show that in the $i$-th induction step, solving the reachability and safety games  performs  $O(|\C^{i+1}\setminus\C^i|)$ operations. This can be proven by a careful analysis of the operations and the sizes of resulting BDDs using standard results on safety and reachability games. Details have been deferred to Appendix~\ref{app:symbolicweakbuchi}.\qed

\subsection{Realizability and Synthesis for Separated GR($k$) Games}\label{sec:symsol}

We finally make use of the elements obtained so far towards solving synthesis for Separated GR($k$) games. 
Our construction follows the overview from Section~\ref{sec:highlevel}. 
To recall, we describe and construct two auxiliary strategies $f_B$ and $f_{\travel}$ and combine them to generate the final strategy $f$. 
We use the delay property theorem from Section~\ref{sec:delay} to prove the correctness of our algorithm. 


We are given a Separated GR($k$) game structure $\GS = (\I,\O,\theta_\I,\theta_\O,\rho_\I,\rho_\O)$ and a winning condition $\varphi=\bigwedge_{l=1}^k \varphi_l$, where  $\varphi_l  =
\bigwedge_{i=1}^{n_l} \ltlG \ltlF(a_{l,i}) \rightarrow \bigwedge_{j=1}^{m_l} \ltlG \ltlF(g_{l,j}))$. We first represent $\GS$ and $\varphi$ as BDDs by standard means.
We then define and construct the following.

\paragraph{Constructing $f_B$.}
Auxiliary strategy $f_B$ is the winning strategy of the system player in a weak B\"uchi game constructed form the separated GR($k$) game.
To construct a weak \Buchi game, we first construct, in $O(|\varphi|+N)$ symbolic operations, a BDD $\Acc(\I \cup \O)$ that describes the set of accepting states. The construction is standard, see Lemma~\ref{lem:acc} in Appendix~\ref{appx:grk-strategy} for details. Next, let $acc$ be the assertion represented by $\Acc$ (the assertion defined in Section~\ref{sec:highlevel}). Then the Weak \Buchi game is $B=(\GS, \ltlG \ltlF (\acc))$. Finally, we construct $f_{B}$ as the winning strategy of $B$, following Section~\ref{sec:accepting-SCCs} and Section~\ref{app:symbolicweakbuchi}.

\paragraph{Constructing $f_{\travel}$.}
For the construction of $f_{\travel}$, we arbitrarily order all guarantees that appear in our GR($k$) formula: ${\gar}_0,\dots,{\gar}_{m-1}$. For each guarantee $\gar_j$, we construct a reachability strategy $f_{r(j)}$ that, when applied inside an SCC $S_\O$ in the output game graph $G_\O$, moves towards a state that satisfies $\gar_j$ without ever leaving $S_\O$. In case no such state exists in $S_\O$, $f_{r(j)}$ returns a distinguished value $\bot$. Note that this strategy can entirely ignore the inputs.
We equip $f_\travel$ with a memory variable $\mathit{mem}$ that stores values from $\{0,\dots,m-1\}$. Then $f_\travel(s,i)$ is operated as follows: for $\mathit{mem},\mathit{mem}+1,\dots$ we find the first $\mathit{mem}+j \pmod{m}$ such that the SCC of $s$ includes a $\gar_j$-state, and   activate $f_{r(\mathit{mem}+j)}$ to reach such state. If no guarantees can be satisfied in $S$, we just return an arbitrary output to stay in $S_\O$. 
The construction of $f_\travel$ requires $O(|\varphi|N)$ symbolic BDD-operations as we need to construct $m$ reachability strategies (clearly, $m\leq |\varphi|$).

\paragraph{Constructing the overall strategy $f$.}

Finally, we interleave the strategies $f_B$ and $f_\travel$ into a single strategy $f$ as follows:
given a state $s$ and an input $i$, if $s \models \Acc(X)$ (that is, if $s$ is an accepting state), then set $f(s,i)=f_\travel(s,i)$; otherwise set $f(s,i) = f_B(s,i)$. Whenever $f$ switches from $f_B$ to $f_\travel$, the memory variable $\mathit{mem}$ is reset to $0$.
The next lemma proves that if $f_B$ is winning then so is $f$. 

\begin{lemma}\label{lem:win} 
If $f_B$ is a winning strategy for the weak \Buchi game $B = (\GS,\ltlG \ltlF(\acc))$, then $f$ is a winning strategy for the Separated GR($k$) game $(\GS, \varphi)$.
\end{lemma}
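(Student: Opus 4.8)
The plan is to prove that $f$ wins from every state of the winning region $W$ of the weak B\"uchi game $B=(\GS,\ltlG\ltlF(\acc))$. Since $f_B$ is winning, for every $\inp\models\theta_\I$ there is an output $\out$ with $\out\models\theta_\O$ and $(\inp,\out)\in W$, so proving this is enough to conclude that $f$ is winning for $(\GS,\varphi)$. I would therefore fix such a state and a play $\pi=s_0,s_1,\dots$ consistent with $f$ with $s_0\in W$, and show $\pi\models\varphi$.

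First I would show that $W$ is closed under the moves that $f$ prescribes, so that every state of $\pi$ lies in $W$, by induction on the position. At a non-accepting state the move is the one chosen by $f_B$, and the winning region of a memoryless winning strategy is closed under the moves it prescribes (the standard argument, using that $\ltlG\ltlF(\acc)$ is a tail property). At an accepting state $s_t=(\inp_t,\out_t)$, which lies in an accepting SCC $S$ with projections $S_\I$ on $G_\I$ and $S_\O$ on $G_\O$, the move is the one chosen by $f_\travel$, which never leaves $S_\O$; hence $\out_{t+1}\in S_\O$, so there is a path in $G_\O$ from $\out_{t+1}$ back to $\out_t$, while $\inp_{t+1}$ is a $G_\I$-successor of $\inp_t$. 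Feeding this input path and this output path into the Delay Property Theorem (Theorem~\ref{thm:delay}) gives $s_{t+1}\in W$. Here one may also invoke Corollary~\ref{cor:delay} to record that the whole SCC of each visited state is contained in $W$.

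Next I would analyze the SCC $S$ in which $\pi$ ultimately stabilizes, say from position $t_0$ on, noting $s_{t_0}\in W$ by the previous step. I claim $S$ is an accepting SCC in the sense of Section~\ref{sec:highlevel}: otherwise every state of the tail $s_{t_0},s_{t_0+1},\dots$ is non-accepting, so $f$ agrees with $f_B$ along the tail, making the tail an $f_B$-consistent play from $s_{t_0}\in W$ that never enters an accepting component --- contradicting that $f_B$ wins $B$, since $\ltlG\ltlF(\acc)$ is a tail property. Consequently every state of $S$ is accepting, $f$ plays $f_\travel$ on the whole tail inside the single output SCC $S_\O$, and --- since $S_\O$ is finite and strongly connected and the system alone controls the outputs, so each reachability phase of $f_\travel$ terminates --- the construction of $f_\travel$ guarantees that every $\gar_j$-state that exists in $S_\O$ is visited infinitely often along $\pi$. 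To finish, I would fix $l\in\{1,\dots,k\}$ and use the definition of an accepting SCC: either every $g_{l,j}$ holds at some state of $S_\O$, in which case $\pi\models\ltlG\ltlF(g_{l,j})$ for all $j$ and the consequent of $\varphi_l$ holds; or some $a_{l,i}$ holds at no state of $S_\I$, in which case, as the tail of $\pi$ keeps its input projection inside $S_\I$, we get $\pi\not\models\ltlG\ltlF(a_{l,i})$ and the antecedent of $\varphi_l$ fails. Either way $\pi\models\varphi_l$, so $\pi\models\varphi$.

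The hard part will be the accepting-state case of the first step, where separation of variables is essential: $f_\travel$ advances one edge in $G_\O$ while the environment advances one edge in $G_\I$, and the only way to re-establish membership in $W$ is to ``pay back'' the output move along a possibly long path inside $S_\O$ and invoke Theorem~\ref{thm:delay}. The delicate points there are matching the orientation of the input and output paths to the hypotheses of Theorem~\ref{thm:delay}, and verifying that $f_\travel$ never leaves $S_\O$ so that a return path is available; everything else reduces to bookkeeping with tail properties of $\ltlG\ltlF$ and the definition of an accepting SCC.
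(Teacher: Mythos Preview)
Your proposal is correct and follows essentially the same approach as the paper: establish that every $f$-consistent play starting in the winning region $W$ of $B$ stays in $W$, using Theorem~\ref{thm:delay} for the $f_\travel$ steps (with exactly the path-orientation you describe), and then argue that the limiting SCC must be accepting and that $f_\travel$ discharges the GR($k$) condition there. You supply more detail than the paper's main-text proof---in particular the contradiction argument that the limiting SCC is accepting and the explicit case split over the two clauses of the accepting-SCC definition---but these are precisely the points the paper defers to its appendix (Lemma~\ref{lem:gr(k)-strategy-soundeness}), so there is no genuine difference in strategy.
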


\begin{proof}
Since $f_B$ is a winning strategy, then for every initial input $\inp \models \theta_\I$ there is an initial output $\out \models \theta_\O$ such that $(i,o)$ is in the winning region of $GS$.
We show that playing $f$ always keeps the play in the winning region of $GS$, and therefore the play eventually converges to an accepting SCC. Once this happens, following $f_\travel$ guarantees that $\varphi$ is satisfied.
We know that as long as the play is in the winning region of $B$, following $f_B$ will keep it inside the winning region. Therefore, when we switch from $f_B$ to $f_\travel$ we must be inside the winning region and, by definition of $f$, in some accepting SCC $S$.
Then $f_\travel$ makes sure that as long as the environment remains in $S|_\I$, the projection of $S$ over the inputs, the system remains in $S|_\O$, the projection of $S$ over the output. Thus all in all the play remains in the winning region of $S$.

Therefore, the only way that the play can leave the winning region is if, when the system is in a state $(\inp_0, \out_0)$ and chooses some output $\out_{-m}$ according to $f_\travel$, the environment chooses input $\inp_n$ such that the play leaves $S$ and moves to a state $(\inp_n, \out_{-m})$ in a different SCC of $G$.
Note, however, that in this case there is a path from $\inp_0$ to $\inp_n$ and a path from $\out_{-m}$ to $\out_0$ (since by construction $f_\travel$ remains in the same SCC in $G_\O$). Since $(\inp_0, \out_0)$ is in the winning region, by Theorem~\ref{thm:delay} we have that  $(\inp_n, \out_{-m})$ is in the winning region as well. 
\qed
\end{proof}






\subsubsection{Final Results.}

Given Lemma~\ref{lem:win}, we can obtain our final results on synthesis and realizability of Separated GR($k$) games, as follows.
Given a Separated GR($k$) game $(\GS,\varphi)$, construct $\acc$ and solve the weak \Buchi game $(\GS,\ltlG \ltlF(\acc))$. Then construct $f_B$, $f_{travel}$ and $f$ as described above. If realizable, then $f_B$ is a winning strategy and
from Lemma~\ref{lem:win} we have that $f$ is a winning strategy for $(\GS,\varphi)$.
If $(\GS,\ltlG \ltlF(\acc))$ is unrealizable, then the environment can force every play to converge to a non-accepting SCC. Since the GR($k$) winning condition cannot be satisfied from a non-accepting SCC, $(\GS,\varphi)$ is also not realizable. Thus we have the following theorem, see Appendix~\ref{appx:grk-strategy} for details.

\begin{theorem}\label{thm:realiizability}
Realizability for separated GR($k$) games can be reduced to realizability of weak \Buchi games.
\end{theorem}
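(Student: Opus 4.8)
The plan is to exhibit an (efficiently computable) map sending a separated GR($k$) game $(\GS,\varphi)$ to the weak \Buchi game $B=(\GS,\ltlG\ltlF(\acc))$, where $\acc$ is the assertion of Section~\ref{sec:highlevel} that holds exactly at states lying in accepting SCCs, and then to prove that $(\GS,\varphi)$ is realizable iff $B$ is. First I would check the map is legitimate: since $\acc$ is by construction a union of SCCs, every SCC of the game graph is contained in or disjoint from the set of $\acc$-states, so $B$ really is a weak \Buchi game; and $\acc$ is built from $\GS$ and $\varphi$ in $O(|\varphi|+N)$ symbolic operations. I would also note that the two games share the game structure, hence the same $\theta_\I$ and $\theta_\O$, so the initial-condition clause in the definition of ``winning strategy'' reads identically on both sides. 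The forward implication is then immediate from Lemma~\ref{lem:win}: from a winning strategy $f_B$ for $B$, assemble $f_{\travel}$ and the interleaved strategy $f$ exactly as in Section~\ref{sec:symsol}; Lemma~\ref{lem:win} says $f$ wins $(\GS,\varphi)$, so $(\GS,\varphi)$ is realizable.

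For the converse I would argue the contrapositive. Assuming $B$ unrealizable, determinacy of (weak) \Buchi games gives an input $\inp\models\theta_\I$ such that from every state $(\inp,\out)$ with $\out\models\theta_\O$ the environment can force every play into a non-accepting SCC. The key move is to strengthen such an environment strategy so that, besides settling in a non-accepting SCC $S$, it visits every state of the input projection $S_\I$ infinitely often: because the environment controls $G_\I$ unilaterally (by separation), it can descend through the SCC-DAG of $G_\I$, react to the system's past moves, and eventually commit to a final input-SCC $S_\I$ with the property that $S_\I$ combined with any output-SCC $S_\O$ the system can still reach yields a non-accepting product — such a commitment point exists precisely because $(\inp,\out)$ lies in the environment's winning region of $B$ — and thereafter cycle through all of $S_\I$ (possible since $S_\I$ is strongly connected). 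Now any play $\pi$ against this strategy converges to a non-accepting SCC $S$ with projections $S_\I,S_\O$, and by the definition of a non-accepting SCC there is an index $l$ with: every $a_{l,i}$ is satisfied at some state of $S_\I$, while some $g_{l,j_0}$ is satisfied at no state of $S_\O$. Since $\pi$ visits all of $S_\I$ infinitely often, every $\ltlG\ltlF(a_{l,i})$ holds on $\pi$; since the output side of $\pi$ stays in $S_\O$, $\ltlG\ltlF(g_{l,j_0})$ fails; hence $\varphi_l$, and with it $\varphi$, is violated on $\pi$. So the environment wins $(\GS,\varphi)$ from every initial state over this $\inp$, i.e. $(\GS,\varphi)$ is unrealizable.

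The step I expect to be the real obstacle is this strengthening: making the environment's two objectives compatible — forcing convergence to a non-accepting SCC while also exhausting that SCC's input states — which amounts to showing the ``exploration'' cycling never slips below $S_\I$ and that the system cannot exploit it to drift into an accepting product SCC. Separation of variables is exactly what reconciles the two, and turning it into a precise construction (presumably a round-robin counter over $S_\I$ layered on top of the environment's weak-\Buchi winning strategy) is where the bookkeeping sits; I also want to be careful that the naive claim ``$\varphi$ cannot be satisfied from a non-accepting SCC'' is false without this active cycling, since a stationary play can vacuously satisfy $\varphi_l$ by failing an antecedent. A cleaner packaging worth attempting is to prove both inclusions at once: that in a separated game the system's winning regions of $(\GS,\varphi)$ and of $B$ coincide — the forward inclusion being Lemma~\ref{lem:win}, the reverse the argument above — using Corollary~\ref{cor:delay} to record that both regions are unions of SCCs, which should simplify the commitment argument.
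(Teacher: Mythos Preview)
Your proposal is correct and follows essentially the same route as the paper: forward via Lemma~\ref{lem:win}, converse by having the environment first force convergence to a non-accepting SCC and then actively cycle so as to satisfy all antecedents of some $\varphi_l$ whose consequent is unattainable there. The paper packages your ``strengthening'' step a bit more cleanly by invoking the \emph{dual} Delay Property (swap the roles of $\I$ and $\O$ in Theorem~\ref{thm:delay}) to guarantee that the environment's cycling inside $S_\I$ cannot let the system escape the environment's winning region, and it has the environment cycle only through the assumption states via an $f^e_{\travel}$ analogue rather than through all of $S_\I$; but these are presentational differences, not substantive ones.
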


The final result on solving Separated GR($k$) games is as follows:

\begin{theorem}\label{thm:main}
Let $(\mathit{\GS},\varphi)$ be a separated GR($k$) game over the input/output variables $\I$ and $\O$, respectively. Then, the realizability and synthesis problems for $(\mathit{\GS},\varphi)$ are solved in $O(|\varphi|+N)$ and $O(|\varphi|N)$ symbolic operations, respectively, where $N=|2^{\I\cup\O}|$.
\end{theorem}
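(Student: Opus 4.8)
The plan is to assemble the components already built in Sections~\ref{sec:accepting-SCCs} and~\ref{sec:symsol} and to account for the symbolic cost of each one. Correctness is immediate from the earlier results: by Theorem~\ref{thm:realiizability}, realizability of $(\GS,\varphi)$ is equivalent to realizability of the weak \Buchi game $B=(\GS,\ltlG\ltlF(\acc))$, with $\acc$ the accepting-SCC assertion of Section~\ref{sec:highlevel}; and if $B$ is realizable, Lemma~\ref{lem:win} shows that the interleaved strategy $f$ obtained from $f_B$ and $f_\travel$ wins $(\GS,\varphi)$. If $B$ is unrealizable, the environment forces every play into a non-accepting SCC, where $\varphi$ fails, so $(\GS,\varphi)$ is unrealizable as well. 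Thus the whole procedure is: (1) build the BDD $\Acc$ for $\acc$; (2) solve $B$, obtaining its winning region, the realizability answer, and the strategy $f_B$; and, for synthesis, (3) build $f_\travel$; (4) interleave $f_B$ and $f_\travel$ into $f$.

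For the realizability bound I would add up the cost of steps (1) and (2). Step~(1) costs $O(|\varphi|+N)$ symbolic operations by the construction of $\Acc$ (Lemma~\ref{lem:acc}). Step~(2) costs $O(N)$ symbolic operations by Theorem~\ref{thm:weakbuchi}, which applies to any weak \Buchi game and in particular to $B$; reading off the realizability bit via the BDD $\forall\I(\InitIn(\I)\rightarrow\exists\O(\InitOut(\O)\wedge\winning(\I\cup\O)))$ adds only $O(1)$ operations. Summing gives $O(|\varphi|+N)$ symbolic operations for realizability.

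For the synthesis bound I would additionally account for steps (3) and (4). Step~(3) constructs one reachability strategy $f_{r(j)}$ per guarantee $\gar_j$; there are $m\le|\varphi|$ guarantees and each reachability strategy, localized to the SCCs of the output graph $G_\O$, costs $O(N)$ symbolic operations, for a total of $O(|\varphi|N)$. Step~(4) glues $f_B$ and $f_\travel$ together using the already-computed $\Acc$, adding at most $O(|\varphi|)$ operations. Since $O(|\varphi|+N)+O(|\varphi|N)+O(|\varphi|)=O(|\varphi|N)$, synthesis runs in $O(|\varphi|N)$ symbolic operations.

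The step I expect to be the main obstacle is the complexity bookkeeping in step~(3): one has to verify that the $m$ reachability strategies can each be computed in $O(N)$ symbolic operations even though each one must be restricted to \emph{every} SCC of $G_\O$ at once, so that localizing reachability per output SCC is no more expensive than a single global reachability game, and that equipping $f_\travel$ with the memory variable $\mathit{mem}$ does not blow up the BDD representation. I would handle this with the standard symbolic reachability/safety analysis together with the fact that SCCs partition the vertex set, and defer the full accounting to Appendix~\ref{appx:grk-strategy}.
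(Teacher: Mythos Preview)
Your proposal is correct and follows essentially the same argument as the paper: it invokes Theorem~\ref{thm:realiizability} and Lemma~\ref{lem:win} for correctness, then tallies $O(|\varphi|+N)$ for building $\Acc$ (Lemma~\ref{lem:acc}), $O(N)$ for solving the weak \Buchi game and extracting $f_B$ (Theorem~\ref{thm:weakbuchi}), $O(1)$ for the realizability check, and $O((\#\mathit{gars})N)\le O(|\varphi|N)$ for $f_\travel$. Your extra caveat about per-SCC reachability in step~(3) is reasonable but the paper simply asserts the $O((\#\mathit{gars})N)$ bound without further justification in the main proof.
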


\begin{proof}
Following Lemma~\ref{lem:win} and Theorem~\ref{thm:realiizability}, it is left to analyze the number of symbolic operations for constructing $f_B$ and then $f$.
In symbolic operations, constructing $\acc$ takes $O(|\varphi|+N)$, and computing the winning region $W$ for $(\GS,\ltlG \ltlF(\acc))$ takes $O(N)$. Checking realizability can be done by checking if for every initial input $i$ there is an initial output $o$ such that $(i, o) \in W$, which takes $O(1)$. The winning strategy $f_B$ can be computed in the process of computing $W$, taking the same number of operations (see Appendix~\ref{app:symbolicweakbuchi} for details). Finally, constructing $f_{\travel}$ takes $O
((\#\mathit{gars})N)\leq O(|\varphi|N)$, where $\mathit{gars}$ are all guarantees $\ltlG\ltlF(g_{i,\ell})$ that appear in $\varphi$. Therefore, constructing $f$ takes $O(|\varphi| N)$ symbolic operations in total.
\qed
\end{proof}

This is an improvement over the complexity of GR($k$) games, in general~\cite{PitermanP06}.
\section{Implementation and Evaluation}
\label{sec:experiments}

We have implemented our Separated GR($k$) framework for realizability and synthesis in a prototype tool \tool{}. The tool implements our symbolic algorithm using the $\mathsf{CUDD}$~\cite{CUDD} package for BDD manipulation. Our tool is evaluated on a suite of benchmarks created from the examples described in Section~\ref{sec:transducers}.




\subsubsection{Benchmark Suite.}

We have created a suite of parametric benchmarks from the three examples described in Section~\ref{sec:transducers}. Our suite consists of 38 realizable specifications. The parametric versions of the examples are described here. 

The \emph{multi-mode hardware} example is a generalization of the  example presented at the beginning of Section~\ref{sec:transducers}. It is parameterized by the number of bits $n$ and has $2^n$ modes. The \emph{Target} can move from mode $0$ to any mode and stay there, while the \emph{Adaptee} can only move from mode $0$ to odd-numbered modes, and up and down between modes $2i$ and $2i+1$. The specification consists of $2n$ variables. We generate 10 such benchmarks with $n\in\{1,\ldots,10\}$.

The {\em cleaning robots} example is parameterized in the number of rooms.
For a scenario with $n$ rooms, the specification is written over $4n+1$ variables. 
We create 10 such benchmarks with $n\in\{1\ldots,10\}$.


The {\em railways signalling} example consists of two parameters: a junction of $n$ railways and the frequency parameter $m$.
With parameters $n$ and $m$, the specification consists of  $(2+2\lceil \log m\rceil)n$ variables. We generate 18 benchmarks with $n \in \{2,\ldots, 10\}$ and $m \in \{2,3\}$.

\subsubsection{Experimental Setup and Methodology.}

We evaluate our tool against \textsf{Strix}~\cite{StrixWeb,MeyerSL18,LuttenbergerMS20}, the current state-of-the-art tool for $\ltl$ synthesis and SYNTCOMP 2020 winner of 3 out of 4 tracks~\cite{SyntComp}. In order to run our benchmarks on \textsf{Strix}, we transform the benchmarks (a game structure and a winning condition) into an $\ltl$ formula that characterizes the same winning plays using the strict semantics from~\cite{Jacobs016}. 
To the best of our knowledge, there is no other synthesis/realizability tool that operates on GR($k$) specifications. 

We compare the running time
for checking realizability. For this, we compare the runtime of realizability checks of each benchmark on both tools. Every benchmark is tested 10 times on both tools. We do this to account for the randomness introduced during BDD construction due to the automatic variable ordering by \textsf{CUDD}.
For each benchmark we evaluate (a). the number of executions on which the tools terminate and (b). the mean running time over 10 executions. 
 
All experiments were executed on a single node of a high-performance computer cluster consisting of an Intel Xeon processor running at 2.6 GHz with 32GB of memory with a timeout of 10 mins.




\begin{figure}[t]
    \centering
    \begin{subfigure}{0.5\textwidth}
        \centering
        \includegraphics[width=\textwidth]{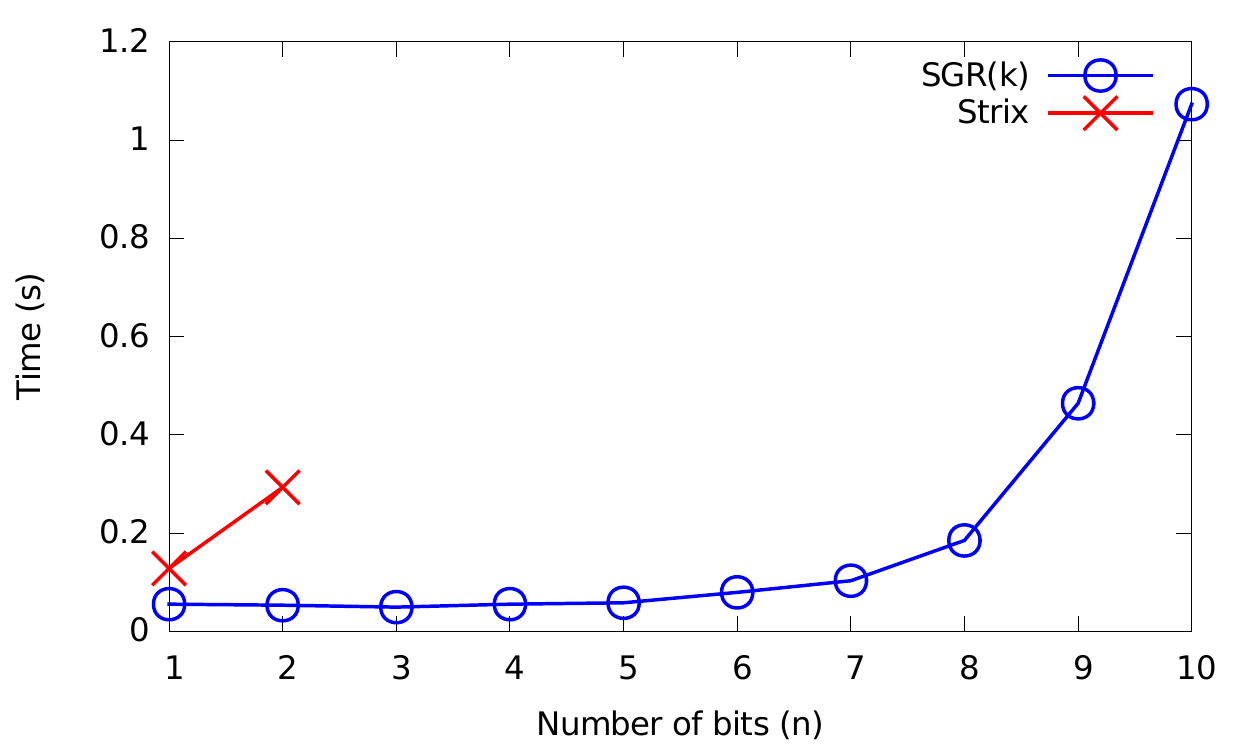}
        \caption{Multi-Mode Hardware}
         \label{fig:cleaning-robots}
    \end{subfigure}%
    \begin{subfigure}{0.5\textwidth}
        \centering
         \includegraphics[width=\textwidth]{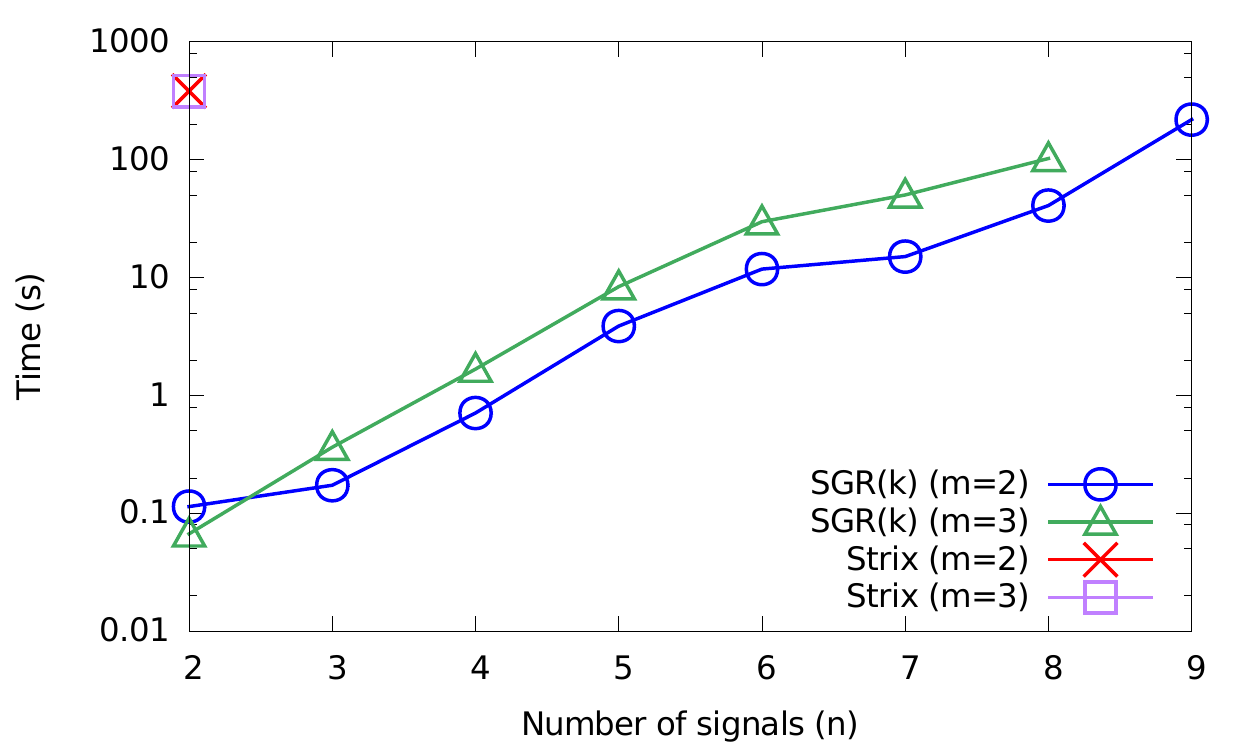}
        \caption{Railway Signalling}
        \label{fig:railway-signaling}
    \end{subfigure}

   \begin{subfigure}{0.5\textwidth}
        \centering
        \includegraphics[width=\textwidth]{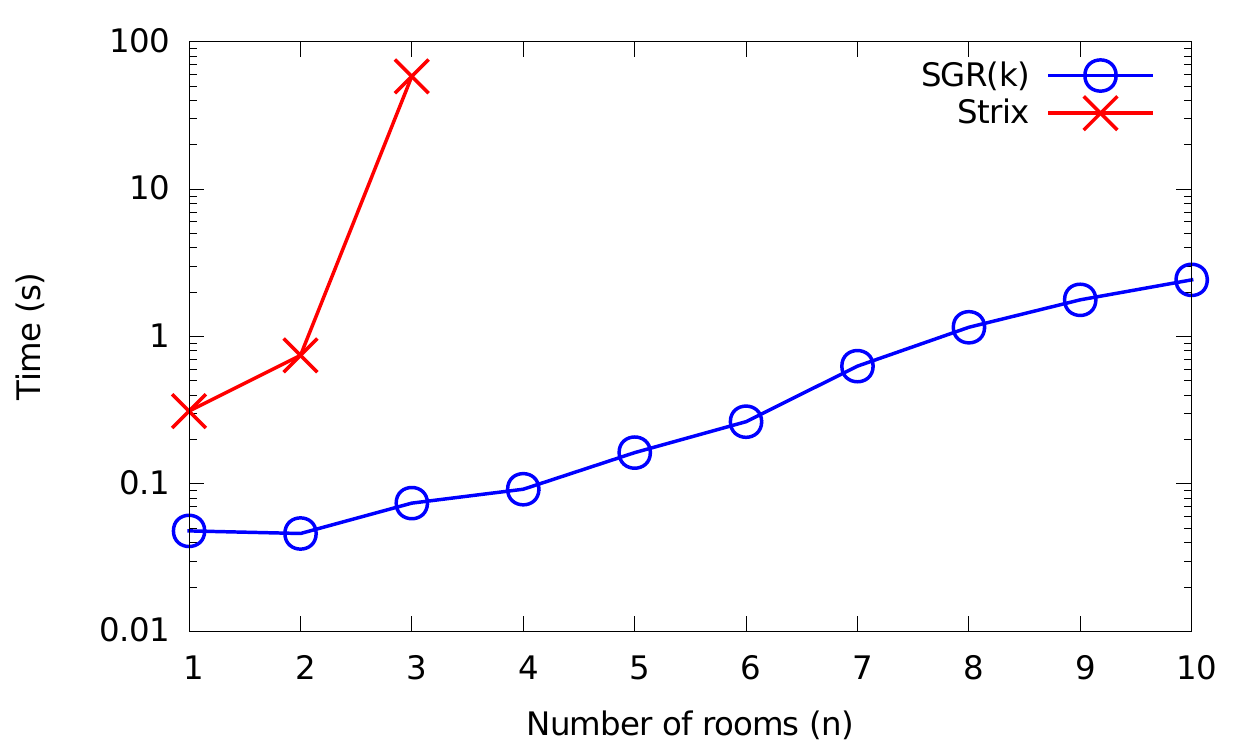}
       \caption{Mean Cleaning Robots running time \label{fig:robots}} 
\end{subfigure}
    
        \caption{Mean running time for different classes of benchmarks.}
        \label{fig:running-time}
\end{figure}

\subsubsection{Observations and Inferences.}


Our experiments clearly demonstrate the scalability and efficiency of our tool in solving Separated GR($k$) formulas.

\begin{table}[b!]
    \centering
    \begin{tabular}{|c|c||c|c|c|c|c|c|c|c|c|c|}
    \hline
         \multicolumn{2}{|c||}{$n$} & 1 & 2 & 3 & 4 & 5 & 6 & 7 & 8 & 9 & 10   \\ \hline \hline

        \multirow{2}{*}{$\sf{MultiMode(\text{$n$})}$} & \tool{} & 0.06 & 0.05 & 0.05 & 0.06 & 0.06 & 0.08 & 0.1 & 0.19 & 0.46 & 1.07 \\
         \cline{2-12}
         & Strix & 0.13 & 0.29 & TO & TO & TO & TO & TO & TO & TO & TO \\ \hline \hline

         \multirow{2}{*}{$\sf{Cleaning(\text{$n$})}$} & \tool{} & 0.05 & 0.05 & 0.07 & 0.09 & 0.16 & 0.26 & 0.63 & 1.16 & 1.78 & 2.43 \\
         \cline{2-12}
         & Strix & 0.31 & 0.75 & 58.3 & TO & TO & TO & TO & TO & TO & TO \\
         \hline
         
         \hline \hline
         
        \multirow{2}{*}{$\sf{Railways(\text{$n,2$})}$} & \tool{} & - & 0.11 & 0.17 & 0.71 & 3.88 & 11.8 & 15.1 & 40.8 & 219 & TO  \\
         \cline{2-12}
         & Strix & - & 382 & TO & TO & TO & TO & TO & TO & TO & TO \\
         \hline
         
        \multirow{2}{*}{$\sf{Railways(\text{$n,3$})}$} & \tool{} & - & 0.07 & 0.36 & 1.67 & 8.39 & 29.8 & 50.3 & 102 & TO & TO \\
         \cline{2-12}
         & Strix & - & 381 & TO & TO & TO & TO & TO & TO & TO & TO \\
         \hline

    \end{tabular}
    
    \caption{Mean realizability check running times (sec.)}
    \label{tbl:results}
\end{table}

\autoref{fig:running-time} plots the mean running time for the three  benchmarks. We further report the mean values in Table~\ref{tbl:results}. The table rows refer to the benchmarks we examine, and the columns refer to the value of the parameter $n$. As an example, for the specification $\sf{Cleaning(\text{3})}$, \tool{}'s mean running time is 0.07 sec. (row titled $\sf{Cleaning(\text{n})}$;\tool{}, column titled 3) and $\mathsf{Strix}$'s median realizability check running time is 58.3 sec. (row titled $\sf{Cleaning(\text{n})}$;Strix), column titled 4). Cells reading `TO' indicate experiments reached a timeout.


The results show that our tool solves a significantly larger number of benchmarks than \textsf{Strix}. On the few benchmarks which \textsf{Strix} solves, our tool outperforms it by several orders of magnitude. Although the running time may vary depending on the automatic variable ordering chosen by \textsf{CUDD}, we do not believe it would vary enough to significantly change the results. Specifically, we calculated the 99\% confidence interval for our results, and validated that for all data points our tool's entire interval lies below the entire interval for \textsf{Strix}.

Only three benchmarks were unsolvable by our tool (in the sense that the majority of the 10 executions timed out).
The three benchmarks are the railway signal examples with $(n=10,m=2)$, $(n=9,m=3)$, and $(n=10,m=3)$. These benchmarks consist of a large number of variables (54, 40, and 60, respectively), making them particularly challenging. All executions of the remaining benchmarks were solved in less than 4 mins by our tool.




We also examined the number of solved executions per benchmark. Our tool solved all 10 executions for 35 out of 38 benchmarks. These are the 35 benchmarks that appear as solved in \autoref{fig:running-time}. For the railway signalling benchmark with $(n=10,m=2)$, our tool solved 2 out of 10 executions. In contrast, $\textsf{Strix}$ was not able to solve even one execution for 31 out of 38 benchmarks. Even increasing the timeout to 8hrs only allowed \textsf{Strix} to solve a single additional benchmark. In total, $\mathsf{Strix}$ and our tool verified realizability of 7 benchmarks and 36 out of 38 benchmarks, respectively.

In summary, our experiments demonstrate that our tool is able to solve specifications which are challenging for existing tools.
\section{Related Work}\label{sec:related}


The Adapter design pattern was introduced in~\cite{gamma1995design}, and has been used in many
software contexts since. Our interpretation of the pattern is inspired by automata
based description of the pattern proposed by Pedrazzini~\cite{10.1007/3-540-48057-9_19}. We reformulated the problem as synthesis of reactive controllers that compose with existing systems to achieve a temporal specification. This is an active area of research~\cite{bansal2019synthesis,d2013synthesizing,ciolek2016interaction,berardi2003automatic,alur2016compositional,FLOV18}. Our work differs from existing frameworks in its variables separation feature. \emph{Shield synthesis} is a notable similar problem in which a synthesized controller corrects safety violations of an existing
controller~\cite{KonighoferABHKT17}. In contrast, our problem is mostly concerned about liveness adaptation.

Reactive synthesis of $\ltl$ winning conditions is 2EXPTIME complete in the size of the formula~\cite{pnueli1989synthesis}, making it difficult to scale for applications. 
An approach to overcome the computational barrier has been to investigate fragments and variants of $\ltl$ with lower complexity for synthesis~\cite{AmramMP19,Ehlers11,Zielonka98,de2013linear,lustig2009synthesis}.
The GR($k$) is one such fragment~\cite{BloemJPPS12}. It offers a balance between efficiency and expressiveness.
GR($k$) games are known to be efficient as they are solved in exponential time in the number of conjunctions $k$ rather than exponential in the state-space~\cite{PitermanP06}. 
Several studies have also shown they are highly expressive. As evidence, all properties expressed by deterministic B\"uchi automata (DBA) can be expressed in GR($k$)~\cite{Ehlers11}. A study of commonly appearing $\ltl$ patterns has shown that 52 of 55 patterns are DBA properties~\cite{DwyerAC99,MaozR16}. DBA properties have also been identified as common patterns in robotics applications~\cite{MenghiTPGB19}.

Finally, Separated GR($k$) games exhibit the \textit{delay property}. Intuitively, this property means that the system can win even after delaying its action for a finite amount of time while ignoring the environment before ``catching up" with the environment. While this is reminiscent of asynchrony in reactive systems~\cite{BNS18,PnueliR89,ScheweF06}, a further exploration of relations between asynchrony and the  delay property is required.

\section{Conclusion}\label{sec:Discussion} 

 
This paper presents a reactive systems-based model of the adapter design pattern. We model the adapters as transducers and reduce the problem of finding an $\Adapter$ transducer for a given $\Adaptee$ and $\Target$ systems, to the problem of synthesizing strategies for Separated GR($k$) games. Through an analysis of theoretical complexity and algorithmic performance, we show that realizability and synthesis of Separated GR($k$) games is efficient and scalable. Furthermore, by outperforming Strix, an existing state-of-the-art synthesis tool, we show that algorithms for the Separated GR($k$) class of specifications add value to the portfolio of reactive synthesis tools.


The benefits of separation of input and output variables were previously shown in the context of Boolean Functional Synthesis~\cite{CFTV18}. Through this work, we showed that the separation leads to practically viable solutions even in the context of temporal reactive synthesis, more specifically when encoding the types of equivalence relations that appear in reactive adaptation, where properties
of runs of the first system are compared to properties of runs of the other. Since the
systems may be loosely coupled, i.e., they may not run on the same clock, specifications
that impose joint temporal constraints on the two systems may not be realizable. Thus, our proposition to use the type of equivalence that separated GR($k$) formulas allow, gives users the power needed for
comparing the \textit{overall behaviors} of the systems while allowing realizability and efficient
synthesis.

The results presented in this paper encourage future studies on the separation of variables in a broader context. For instance, we could reason about variants of the adapter design pattern that do not separate variables all the way through. That is to say,  variants that translate to more general GR($k$) specifications in which the separation appears in the input and output systems but not in the specification itself.  One could further study the notion of separation of variables in more the general LTL specifications. Another direction is to consider systems that gets two types of input: from the existing system (i.e. the $\Target$)
as well as from an environment.
We believe that these future directions would enable the development of tools for synthesis from temporal specifications with a focus on expressing realistic applications as well as ensuring scalability and efficiency for practical needs. 






\subsubsection*{Acknowledgements}
We thank Supratik Chakraborty and Dana Fisman for their inputs at various stages of the project. 
This work is  supported in part by NSF grant 2030859 to
the CRA for the CIFellows Project, NSF grants IIS-1527668, CCF-1704883,
IIS-1830549, and an award from the Maryland Procurement Office. It was also supported by ISF grant 2714/19 and and
by the Lynn and William Frankel Center for Computer Science.

%
%
 \bibliographystyle{splncs04}
 \bibliography{SeparatedSpecification}

\clearpage

\appendix

\section{Binary Decision Diagrams}\label{app:prelims}



A \emph{Binary Decision Diagram} (BDD)~\cite{Akers78,Bryant86} is a DAG representation of an assertion. The non-leaves BDD nodes are labeled with variable names, the BDD arrows are labeled variable valuations, and BDD leaves with $\mathit{true}$/$\mathit{false}$. BDDs are considered a useful representation as they support Boolean operations and further satisfaction queries that can be computed efficiently~\cite{Bryant92}. Various professional tools that implement BDDs exist e.g.~\cite{CUDD,Budddy,ABCD,CAL,Milvang-JensenH98}.

We say that a BDD $\mathsf{B}$ is over $\mathcal V_1,\dots,\mathcal V_n$, if it represents an assertion over $\mathcal V_1{\cup}\cdots{\cup }\mathcal V_n$, and write $\mathsf{B}(\mathcal V_1,\dots,\mathcal V_n)$. For a BDD $\mathsf{B}(\mathcal V_1,\dots,\mathcal V_n)$ and states $s_1,\dots,s_n$ over $\mathcal V_1,\dots,\mathcal{V}_n$, respectively, we write $(s_1,\dots,s_n)\models \mathsf{B}(\mathcal V_1,\dots,\mathcal V_n)$ if the assertion that
$\mathsf{B}(\mathcal V_1,\dots,\mathcal V_n)$ represents is evaluated to $\mathit{true}$ by the states $s_1,\dots,s_n$. For a set of variables $\mathcal V$, we write $\mathcal V=\mathcal V'$ to denote the BDD that represents the assertion $\bigwedge_{v\in \mathcal V}(v=v')$. We also employ the following \emph{symbolic operations}:
\begin{enumerate}
    \item Boolean operations over BDDs: $\neg,\vee,\wedge,\rightarrow$ with their standard meaning.
    
    \item  For a BDD $\mathsf{B}(\mathcal V_1,\mathcal V_2)$, the BDD $C(\mathcal V_1):=\exists \mathcal V_2 (\mathsf{B}(\mathcal V_1\mathcal ,V_2))$ satisfies the following: For $s_1\in 2^{\mathcal V_1}$,  $s_1\models \mathsf{C}(\mathcal V_1)$ if there exists $s_2\in 2^{\mathcal V_2}$ such that $(s_1,s_2)\models \mathsf{B}(\mathcal V_1,\mathcal V_2)$.
    
    \item Similarly, $C(\mathcal V_1):=\forall \mathcal V_2 (\mathsf{B}(\mathcal V_1\mathcal ,V_2))$ satisfies: For $s_1\in 2^{\mathcal V_1}$,  $s_1\models \mathsf{C}(\mathcal V_1)$ if for all $s_2\in 2^{\mathcal V_2}$, $(s_1,s_2)\models \mathsf{B}(\mathcal V_1,\mathcal V_2)$.
\end{enumerate}

We use BDDs to model sets (e.g. winning regions) in a straightforward manner i.e. $\mathsf{B}(\mathcal V)$ represent $\{s\in 2^\mathcal V : s\models \mathsf{B}(\mathcal V)\}$ .  
    We may also use BDDs to represents memoryless strategies as follows: $\mathsf{F}(\I,\O,\I',\O')$ represents the memoryless strategy by $f(s,i)=o$ iff $(s,i',o')\models\mathsf{F}(\I,\O,\I',\O')$. The paper presents symbolic constructions of BDDs. Time of such 
    symbolic algorithms is measured in \#symbolic-operations performed. 
\section{Proofs for Section~\ref{sec:algorithmsWeakBuchi}}\label{app:StratProofs}

We turn to prove Theorems~\ref{thm:weakbuchi},~\ref{thm:realiizability} and~\ref{thm:main}, given in Section~\ref{sec:algorithmsWeakBuchi}.
Throughout the appendix, we use the fact that $\omega$-regular games are {\em determined}~\cite{Martin75,2001automata}. That is, for each game with an $\omega$-regular winning condition and a state $s$, either the system has a winning strategy form $s$, or the environment has a spoiling strategy from $s$: A strategy that forces the negation of the winning condition regardless of the system's strategy.

\subsection{Proof for Theorem~\ref{thm:weakbuchi}}
\label{app:symbolicweakbuchi}

In this section we prove:

\begin{description}
\item[Theorem~\ref{thm:weakbuchi}.] \em{    Realizability and synthesis for weak \Buchi games can be done in $O(N)$ symbolic steps.}
\end{description}

We remind that we denote the union $\I\cup\O$ by $X$, and presented an $O(N)$ time construction for the next BDDs (see Section~\hyperref[par:BDD-constructions]{6.1, BDD Constructions}): 
\begin{description}
    \item[$\SCC (X,X').$] $(s,t')\models \SCC(X,X')$ iff $s$ and $t$ belong to the same SCC of $G$, the game graph of the game structure $\GS$.
    
    \item[$\terminal (X).$] $s\models \terminal (X)$ if $s$ belongs to a terminal SCC $S$; no other SCC is reachable from $S$. 
\end{description}

\begin{algorithm}[t!]

\begin{description}
    \item{\textbf{Base Case.}} $\winningAux^0(\X) := \terminal(\X) \wedge \Acc(X)$, \\
    $\C^0(\X) := \terminal(\X)$, and  \\   $\strategyAux^0(\X,\X')= \terminal(X)\wedge (\rho_\I\rightarrow\rho_\O)(X)$
    
    \item {\textbf{Inductive Step.}}
    \begin{align*}
    \C^{i+1}(\X) := & \forall \X': \Reach(\X,\X') \rightarrow (\SCC(\X,\X') \vee \C^i(X'))\\
    \C^{i+1}_{new}(\X) := & \C^{i+1}(\X) \setminus \C^i(\X) \\
    \NC^{i+1}(X):= & \C^{i+1}_{new}(X) \wedge \neg\Acc(X)\\
    \AC^{i+1}(X):= & \C^{i+1}_{new}(X) \wedge \Acc(X)
    \end{align*}
    \begin{align*}
    \winningAux^{i+1}(\X) & :=        \winningAux^i(\X) \\
    & \vee \Reachability_{(\NC^{i+1}(X), \winningAux^i(\X))}(X) \\
    &  \vee  
    \Safety_{(\AC^{i+1}(X), \AC^{i+1}(X)\vee\winningAux^i(\X))}(X)
    \end{align*}
    
         \begin{align*}
    \strategyAux^{i+1}(X,X') & := \strategyAux^i(X,X') \\
       & \vee (\NC^{i+1}(X)\wedge\FReachability_{(\NC^{i+1}(X), \winningAux^i(\X))}(X,X')) \\
    &  \vee  
    (\AC^{i+1}(X)\wedge \FSafety_{(\AC^{i+1}(X), \AC^{i+1}(X)\vee\winningAux^i(\X))}(X,X'))
    \end{align*}
  
\end{description}

    \caption{Weak \Buchi winning memoryless strategy construction \label{alg:weak-buchi} }
\end{algorithm}

To provide a rigorous proof for Theorem~\ref{thm:weakbuchi}, in Algorithm~\ref{alg:weak-buchi} we modify the weak \Buchi winning states construction from Section~\ref{sec:algorithmsWeakBuchi} to construct also a corresponding winning strategy.
We remind some notations. Given sets of states $A$ and $B$:
\begin{description}
    \item[$\Reachability_{(A,B)}(\X).$] A BDD for
the states in $A$ from which the system can force reaching $B$.

    \item[$\Safety_{(A,B)}(\X).$]
A BDD for the states in $A$ from which the system can force the play to stay in $B$.
\end{description}
In addition to those BDDs, to reason about the weak \Buchi strategy construction, we consider also their corresponding (memoryless) strategies:
$\FReachability_{(A,B)}(\X,X')$ and $ \FSafety_{(A,B)}(X,X')$.
Note that these two strategies are given as BDDs over $\I{\cup}\O{\cup}\I'{\cup}\O'$, in compliance with the discussion given in Appendix~\ref{app:prelims}.

We argue that algorithm~\ref{alg:weak-buchi} is correct.
\begin{lemma}
\label{lem:-weak-buchi-sound-and-complete}
The fixed-points of $\winningAux(X)$ and $\strategyAux(X,X')$ in Algorithm~\ref{alg:weak-buchi} are BDDs for the winning states and a winning strategy in the weak \Buchi game $(\GS,\ltlG \ltlF( \Acc))$, respectively.
\end{lemma}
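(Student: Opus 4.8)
The plan is to prove Lemma~\ref{lem:-weak-buchi-sound-and-complete} by a double induction: an outer induction on the index $i$ of the fixed-point iteration, and within each step an appeal to the correctness of the reachability and safety sub-games. The key invariant I would maintain is the following: for every $i$, the set $\C^i(X)$ is exactly the union of those SCCs that appear at ``depth at most $i$'' in the topological ordering of $G$ (equivalently, the SCCs $S$ such that every maximal path leaving $S$ enters an SCC already in $\C^{i-1}$ after at most one step); $\winningAux^i(X)$ is exactly the set of system-winning states within $\C^i(X)$; and $\strategyAux^i(X,X')$ encodes a memoryless strategy that is winning from every state of $\winningAux^i(X)$ and, moreover, stays within $\C^i$ until possibly dropping into $\winningAux^{i-1}$. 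I would first verify the base case: $\C^0 = \terminal$ is the union of terminal SCCs, and a play starting in a terminal SCC never leaves it, so the system wins iff it is already in an accepting (terminal) SCC, which is exactly $\terminal \wedge \Acc$; the strategy $\terminal(X) \wedge (\rho_\I \rightarrow \rho_\O)(X)$ just picks any legal move, which suffices since any infinite play inside an accepting terminal SCC satisfies $\ltlG\ltlF(\acc)$.

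For the inductive step I would argue as in the overview paragraph of Section~\ref{sec:accepting-SCCs}. Assuming the invariant for $\C^i$, the BDD expression $\forall X'\colon \Reach(X,X') \rightarrow (\SCC(X,X') \vee \C^i(X'))$ picks out exactly those states $s$ such that every state reachable from $s$ is either in $s$'s own SCC or already in $\C^i$ — i.e. the SCCs whose outgoing edges all land in $\vec S = $ (SCCs of $\C^i$); so $\C^{i+1}$ and $\C^{i+1}_{new}$ match $\bigcup \vec S$ and $\bigcup \vec S_{new}$ from the overview. Then for a non-accepting new SCC, a play must eventually leave it (it is not terminal and every exit goes to $\C^i$), so the system wins from $s \in \NC^{i+1}$ iff it can force $\ltlF(\winningAux^i)$; the correctness of the reachability game (cited from~\cite{2001automata}) gives that $\Reachability_{(\NC^{i+1}, \winningAux^i)}$ is exactly this set, and $\FReachability$ the corresponding winning strategy. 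For an accepting new SCC, staying in it forever satisfies $\ltlG\ltlF(\acc)$, and any exit lands in $\C^i$, so the system wins from $s \in \AC^{i+1}$ iff it can force $\ltlG(\AC^{i+1} \vee \winningAux^i)$ from $s$; the safety game's correctness gives that $\Safety_{(\AC^{i+1}, \AC^{i+1} \vee \winningAux^i)}$ is exactly this set. The disjunction with $\winningAux^i$ then gives all winning states in $\C^{i+1}$, completing the invariant. To see that the interleaved strategy $\strategyAux^{i+1}$ is globally winning, one checks that from any state of $\winningAux^{i+1}$ the strategy either stays in the current accepting SCC forever (winning), or reaches a state of $\winningAux^i$ in finitely many steps, where by the outer induction hypothesis the strategy $\strategyAux^i$ takes over and is winning; since the topological depth strictly decreases, this hand-off happens only finitely often, so the play ultimately settles into an accepting SCC, hence satisfies $\ltlG\ltlF(\acc)$.

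Finally, to close the lemma I would invoke determinacy of $\omega$-regular games (stated in Appendix~\ref{app:StratProofs}): since the algorithm terminates when $\C^i$ covers all of $G$ — which happens after finitely many steps because each step absorbs at least one new SCC — the fixed point $\winning(X)$ contains \emph{every} winning state (if $s$ were winning but not in the fixed point, it lies in some SCC at depth $i$ and the argument above, together with determinacy applied to the reachability/safety sub-games, forces $s \in \winningAux^i \subseteq \winning$, a contradiction), and conversely every state of $\winning(X)$ is winning via $\strategy(X,X')$.

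The main obstacle I anticipate is making the strategy hand-off argument fully rigorous: one must carefully track that $\strategyAux^{i+1}$, on states of $\AC^{i+1}$, does not ``oscillate'' in a way that violates the safety objective, and that the memory-less composition of the reachability/safety strategies with the inherited $\strategyAux^i$ is still well-defined (no state is assigned two different moves). This is essentially bookkeeping — the disjuncts defining $\strategyAux^{i+1}$ are guarded by the disjoint sets $\NC^{i+1}$, $\AC^{i+1}$, and the domain of $\strategyAux^i$ is $\C^i$, which is disjoint from $\C^{i+1}_{new}$ — but it needs to be spelled out. The complexity bound of $O(N)$ symbolic steps (the second half of Theorem~\ref{thm:weakbuchi}) is deferred to the separate analysis: the $\SCC$ and $\terminal$ BDDs cost $O(N)$, and across all iterations the reachability and safety games are solved on disjoint SCC-layers whose sizes sum to $N$, so by the standard linear bounds on reachability and safety games the total is $O(N)$.
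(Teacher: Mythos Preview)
Your proposal is correct and follows essentially the same inductive structure as the paper's proof: induction on $i$, with reachability games handling the non-accepting layer $\NC^{i+1}$ and safety games handling the accepting layer $\AC^{i+1}$, plus a strategy hand-off to $\strategyAux^i$ once the play drops into $\C^i$. The paper organises the two directions slightly differently---it proves \emph{soundness} (the strategy wins from $\winningAux^i$) and \emph{completeness} (the environment spoils from $\C^i\setminus\winningAux^i$) as two separate inductions, explicitly building the environment's counter-strategy rather than invoking determinacy---but the content is the same.

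One sentence in your write-up is wrong and should be fixed: ``for a non-accepting new SCC, a play must eventually leave it (it is not terminal\ldots)''. This is false; a play can cycle forever inside a non-terminal SCC. The correct justification for the ``only if'' direction of your iff is not that the play is forced to leave, but that if the system \emph{cannot} force $\ltlF(\winningAux^i)$ then (by determinacy of the reachability sub-game) the environment can confine the play to $\NC^{i+1}\cup(\C^i\setminus\winningAux^i)$; either the play stays in $\NC^{i+1}$ forever (no accepting state is ever visited) or it enters $\C^i\setminus\winningAux^i$, where the inductive hypothesis applies. This is exactly how the paper argues completeness, and it is what your final paragraph gestures at---just make sure the faulty sentence does not stand as the justification.
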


\begin{proof}
Let $\winning(X)$ and $\strategy(X,X')$ be the fixed-points of $\winningAux(X)$ and $\strategyAux(X,X')$.  
We divide the proof into \textbf{soundness} and \textbf{completeness}. That is, we show that (a). $\strategy(X,X')$ wins from $\winning(X)$, and (b). no other state is winning for the system.

\textbf{Soundness.} We show, by induction, that for each $i$, $\strategyAux^i(X,X')$ is a winning strategy from $\winningAux^i(X)$. The claim clearly holds for $\strategyAux^0(X)$, because every strategy is winning from the accepting terminal SCCs.

For the induction step, take $s\in \winningAux^{i+1}(X)$:
\begin{enumerate}
    \item If $s\in \winningAux^i(X)$ we are done by the induction hypothesis (note that $\strategyAux^{i+1}(X,X')$ coincides with $\strategyAux^i(X,X')$ on $\winningAux^i(X)$). 
    
    \item f $s\in \Reachability_{(\NC^{i+1}(X), \winningAux^i(\X))}(X)$, $\strategyAux^{i+1}(X,X')$ follows the next strategy from $s$: play reachability to reach $\winningAux^i(X)$, and then apply $\strategyAux^i(X,X')$ for the rest of the play. By the induction hypothesis, any play consistent with this strategy is winning for the system. 
    
    \item Finally, if $s\in     \Safety_{(\AC^{i+1}(X), \AC^{i+1}(X)\vee\winningAux^i(\X))}(X)$, the system plays the next strategy from $s$: play safety inside $\AC^{i+1}(X)\vee\winningAux^i(\X))$, and if the play traverses to $\winningAux^i(\X)$, switch  
    to $\strategyAux^i(X,X')$. Since all states of $\AC^{i+1}(X)$ are accepting, by the induction hypothesis, any play consistent with this strategy is winning for the system.
\end{enumerate}

\textbf{Completeness.} We show, by induction, that for each $i$, the environment has a winning strategy from  $\C^i(X)\setminus\winningAux^i(X)$. The claim clearly holds for $i=0$, since every strategy for the environment  is winning from the non-accepting terminal SCCs.

For the induction step, take $s\notin \winningAux^{i+1}(X)$ (and thus $s\notin \winningAux^i(X)$).
\begin{enumerate}
    \item First, assume that $s\in \NC^{i+1}(X)$.\\ Since  $s\notin \Reachability_{(\NC^{i+1}(X), \winningAux^i(\X))}(X)$, the environment has a safety strategy on the region $\NC^{i+1}(X)\cup (\C^{i}(X)\setminus \winningAux^i(X))$ from $s$. A play consistent with this strategy either stays in $\NC^{i+1}(X)$, or reaches $\C^{i}(X)\setminus \winningAux^i(X)$. By the induction hypothesis, in both cases the environment wins.
    
    \item Otherwise, $s\in \AC^{i+1}(X)$. Since     $s\notin\Safety_{(\AC^{i+1}(X), \AC^{i+1}(X)\vee\winningAux^i(\X))}(X)$, the environment has a strategy to reach $\C^i(X)\setminus\winningAux^i(X)$ from $s$. Hence, by the induction hypothesis, the environment has a winning strategy from $s$.\qed
\end{enumerate}
\end{proof}

To complete the proof of Theorem~\ref{thm:weakbuchi}, we need to show that Algorithm~\ref{alg:weak-buchi} terminates in time $O(N)$.

\begin{lemma}
\label{lem:weak-buchi-time-analysis}
For each $i$, the computation of  $\Reachability_{(\NC^{i+1}(X), \winningAux^i(\X))}(X)$,
$\Safety_{(\AC^{i+1}(X),\AC^{i+1}(X)\vee\winningAux^i(\X)}(X)$, 
$\FReachability_{(\NC^{i+1}(X), \winningAux^i(\X))}(X,X')$, and 
$\FSafety_{(\AC^{i+1}(X),  \AC^{i+1}(X)\vee\winningAux^i(\X))}(X,X')$ is performed in $O(|\C^{i+1}(X)\setminus \C^i(X)|)$ symbolic operations.
\end{lemma}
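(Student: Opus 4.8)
The plan is to analyze the cost of one inductive step by combining the standard complexity bounds for symbolic reachability and safety games with a careful accounting of which regions of the game graph are actually touched. First I would recall the standard fact (see~\cite{2001automata}) that a symbolic reachability game on a source set $M$ and target set $N$ can be solved by iterating the controllable-predecessor operator, and each such iteration grows the computed set by at least one ``layer'' of states until a fixed point is reached; hence the number of symbolic operations is proportional to the number of \emph{distinct} states that get added to the attractor, i.e. $O(|M|)$ in the worst case (and likewise $O(|M|)$ for the dual safety computation, since its complement is an attractor inside $M$). The analogous statement holds for the strategy-producing variants $\FReachability$ and $\FSafety$, which perform the same fixed-point iteration while additionally recording, via one extra symbolic conjunction per layer, the transition chosen on the newly attracted states; this only multiplies the per-iteration cost by a constant.

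The key step is then to argue that for the calls made in induction step $i+1$, the relevant source set is contained in $\C^{i+1}(X)\setminus\C^i(X)$, so that the $O(|M|)$ bound instantiates to $O(|\C^{i+1}(X)\setminus\C^i(X)|)$. Concretely, the reachability game is called with source $\NC^{i+1}(X) = \C^{i+1}_{new}(X)\wedge\neg\Acc(X)$ and the safety game with source $\AC^{i+1}(X) = \C^{i+1}_{new}(X)\wedge\Acc(X)$, and by definition $\C^{i+1}_{new}(X) = \C^{i+1}(X)\setminus\C^i(X)$. The one subtlety to check is that the fixed-point iterations do not wander outside this set: for the reachability call, once the attractor reaches $\winningAux^i(X)$ it stops (that set is the target and is not re-expanded), and all intermediate states lie in $\NC^{i+1}(X)$ because every edge leaving an SCC of $\vec S_{new}$ goes to $\vec S = \bigcup\{\Statess_{i+1},\dots,\Statess_m\}$, whose states are already in $\C^i(X)$; similarly the safety computation stays within $\AC^{i+1}(X)\vee\winningAux^i(X)$, and its ``work'' is confined to the $\AC^{i+1}(X)$ part since the $\winningAux^i(X)$ part is untouched. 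Thus each of the four computations costs $O(|\C^{i+1}(X)\setminus\C^i(X)|)$ symbolic operations.

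The main obstacle I anticipate is making the ``each iteration adds a new layer, so the iteration count is bounded by the size of the region'' argument precise in the \emph{symbolic} setting — in particular, arguing that the number of symbolic pre-image operations is genuinely bounded by the number of states added (rather than by, say, the BDD size of intermediate sets), and that the bookkeeping BDDs $\C^i$, $\C^i_{new}$, $\NC^i$, $\AC^i$ can be maintained incrementally without incurring extra cost. Once Lemma~\ref{lem:weak-buchi-time-analysis} is established, Theorem~\ref{thm:weakbuchi} follows by summing over $i$: since the sets $\C^{i+1}(X)\setminus\C^i(X)$ are pairwise disjoint and their union is contained in $2^{\I\cup\O}$, the total is $\sum_i O(|\C^{i+1}(X)\setminus\C^i(X)|) = O(N)$, and adding the $O(N)$ cost of precomputing $\SCC$ and $\terminal$ gives the claimed $O(N)$ bound for both realizability and synthesis.
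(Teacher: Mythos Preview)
Your overall plan matches the paper's: bound each fixed-point computation by the number of states that are actually added or removed, and observe that this is confined to the ``new'' layer $\C^{i+1}\setminus\C^i$. The high-level accounting and the final telescoping sum are exactly right.

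There is, however, a genuine gap in the justification of the ``one subtlety''. You argue structurally that intermediate states stay inside $\NC^{i+1}$ (resp.\ that the $\winningAux^i$ part is ``untouched'') by appealing to the direction of edges leaving $\vec S_{new}$. But the attractor iteration works via \emph{backward} controllable predecessors: the fixed point starts at the target $\winningAux^i(\X)$ and grows, so the relevant question is which states have forward edges \emph{into} the current set, not where edges from $\vec S_{new}$ go. In particular, a state $s\in\C^i(\X)\setminus\winningAux^i(\X)$ may well have an edge into $\winningAux^i(\X)$ and could in principle be added; nothing in your structural argument rules this out. Likewise, for the safety computation, the assertion that states in $\winningAux^i(\X)$ are never removed needs a reason: a priori, some $s\in\winningAux^i(\X)$ might fail $\mathit{pre}_s(Z_j)$ at some stage.

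The paper closes both holes by invoking the correctness lemma (Lemma~\ref{lem:-weak-buchi-sound-and-complete}). For reachability: if some $s\in\C^i(\X)\setminus\winningAux^i(\X)$ were ever added to a $Z_{j_0}$, then the system could force reaching $\winningAux^i(\X)$ from $s$, contradicting completeness (such $s$ are losing). For safety: if some $s\in\winningAux^i(\X)$ were ever removed, then the environment could force the play out of the safe set and into $\C^i(\X)\setminus\winningAux^i(\X)$, contradicting soundness (such $s$ are winning). These semantic arguments are what actually pin the iteration count to $|\C^{i+1}\setminus\C^i|$; you should replace the edge-direction reasoning with them.
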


\begin{proof}
We start with $\Reachability_{(\NC^{i+1}(X), \winningAux^i(\X))}(X)$. 
To compute this set of states we perform the next fixed-point computation in the sub-game over $\C^{i+1}(X)$: $Z_0=\winningAux^i(X)$, and $Z_{j+1}=\mathit{pre}_s(Z_j)\cup Z_j$, where $\mathit{pre}_s(Z)$ is the set of states from which the system can force reaching $Z$ in a single step. Thus, $Z_0\subseteq Z_1\subseteq \cdots$, and the winning region is $Z_j$ that satisfies $Z_j=Z_{j+1}$. We refer the reader to~\cite[Chapter~20]{2001automata} for a comprehensive discussion. 

Now, at each step $j_0< j$, at least one state is added to $Z_{j_0}$ thus there are at most $|\C^{i+1}(X)\setminus \winningAux^i(X)|$ steps, each is performed in $O(1)$ operations. Furthermore, if $s\in \C^i(X)\setminus \winningAux^i(X)$, then we never add $s$ to a set $Z_{j_0}$. This claim is argued is follows. If we add $s$ to some $Z_{j_0}$, then the system can force reaching $\winningAux^i(X)$ from $s$, in contradiction to what we proved in Lemma~\ref{lem:-weak-buchi-sound-and-complete} (specifically, the completeness).  
Hence, an upper bound of $O(|\C^{i+1}(X)\setminus\C^i(X)|)$ operations is established. 

A similar argument proves the same for $\Safety_{(\AC^{i+1}(X),\AC^{i+1}(X)\vee\winningAux^i(\X)}(X)$. This BDD is computed by  $Z_0=\AC^{i+1}(X)\vee\winningAux^i(\X)$, and $Z_{j+1}=Z_j \cap \mathit{pre}_s(Z_i)$. Thus, $Z_0\supseteq Z_1\supseteq \cdots$, and the winning region is $Z_j$ that satisfies $Z_j=Z_{j+1}$.  
As before, at each step $j_0<j$ at least a single state is removed from $Z_{j_0}$ and hence, an upper bound of $O(|\AC^{i+1}(X)\vee\winningAux^i(\X)|)$ steps holds. Furthermore, a state $s\in \winningAux^i(X)$ is never removed from every $Z_j$. This claim is argued as follows. Take $s\in \winningAux^i(X)$, and assume that $s\in Z_{j_0}\setminus Z_{j_0+1}$. 
Consequently, the system cannot win the safety game from $s$. Hence, the environment has a strategy to reach $\C^i(X)\setminus \winningAux^i(X)$ form $s$, in contradiction to what we have proved in Lemma~\ref{lem:-weak-buchi-sound-and-complete}.  Therefore, the computation terminates after at most $O( |\AC^{i+1}(X)|)\leq O(|\C^{i+1}(X)\setminus\C^i(X)|)$ steps, as required. 

Finally, by applying standard constructions, the strategies \\ $\FReachability_{(\NC^{i+1}(X), \winningAux^i(\X))}(X,X')$  and \\
$\FSafety_{(\AC^{i+1}(X),  \AC^{i+1}(X)\vee\winningAux^i(\X))}(X,X')$ are constructed in time $O(|\C^{i+1}(X)\setminus\C^i(X)|)$ as well.\qed
\end{proof}

We turn to prove Theorem~\ref{thm:weakbuchi}.
\begin{proof}[proof of Theorem~\ref{thm:weakbuchi}]
By Lemma~\ref{lem:-weak-buchi-sound-and-complete}, Algorithm~\ref{alg:weak-buchi} computes both the system's winning region and a corresponding memoryless winning strategy. Lemma~\ref{lem:weak-buchi-time-analysis} implies that the algorithm terminates in time $O(N)$.
\end{proof}

\subsection{Proofs for Theorem~\ref{thm:realiizability} and Theorem~\ref{thm:main}}
\label{appx:grk-strategy}

In this section we prove:

\begin{description}

\item[Theorem~\ref{thm:realiizability}.] {\em{Realizability for separated GR($k$) games can be reduced to realizability of weak \Buchi games.}}

\item[Theorem~\ref{thm:main}.] {\em{Let $(\mathit{\GS},\varphi)$ be a separated GR($k$) game over the input/output variables $\I$ and $\O$. Then, the realizability and synthesis problems for $(\mathit{\GS},\varphi)$ are solved in $O(|\varphi|+N)$ and $O(|\varphi|N)$ symbolic operations, respectively, where $N=|2^{\I\cup\O}|$.}}

\end{description}

Consider a Separated GR($k$) game $(\GS=(\I,\O,\theta_\I,\theta_\O,\rho_\I,\rho_\O),\varphi)$, and write  $\varphi=\bigwedge_{l=1}^k \varphi_l$, where  $\varphi_l  =
\bigwedge_{i=1}^{n_l} \ltlG \ltlF(a_{l,i}) \rightarrow \bigwedge_{j=1}^{m_l} \ltlG \ltlF(g_{l,j})$. Let $\acc$ be the assertion expressing the set of accepting states, as depicted in Section~\ref{sec:highlevel}:
For a state $s$ in an SCC $S$, with projections  $S_\I$ on $G_\I$ and $S_\O$ on $G_\O$, $s$ is accepting
 if for \textit{every} constraint $\varphi_l$, where $l\in\{1,\dots,k\}$, one of the following holds:
\begin{description}
\item[All guarantees hold in $S$.] For every $j\in\{1,\dots,m_l\}$, there exists $\out\in S_\O$ such that $\out\models g_{l,j}$.

\item[Some assumption cannot hold in $S$.] There exists $j\in\{1,\dots,n_l\}$ such that for all $\inp\in S_\I$, $\inp\not\models a_{l,j}$. 
\end{description}

\begin{lemma} \label{lem:acc}
A BDD $\Acc (X)$ such that $s\models\Acc(X)$ iff $s$ is accepting, can be constructed in $O(|\varphi|+N)$ symbolic operations.
\end{lemma}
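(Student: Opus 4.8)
The plan is to build the BDD $\Acc(X)$ by following the definition of accepting state literally: a state $s$ in an SCC $S$ is accepting iff for every index $l \in \{1,\dots,k\}$, either all guarantees $g_{l,j}$ are realized somewhere in $S_\O$, or some assumption $a_{l,i}$ is realized nowhere in $S_\I$. The crucial observation is that whether $s$ is accepting depends only on which guarantee- and assumption-assertions have a witness in the SCC of $s$; so I first compute, for each individual assertion $\rho$ appearing in $\varphi$, a BDD $\mathsf{InSCC}_\rho(X)$ that holds of exactly those states whose SCC contains a $\rho$-state. This is done using the already-constructed $\SCC(X,X')$ BDD (available in $O(N)$ symbolic operations per Section~6.1): set $\mathsf{InSCC}_\rho(X) := \exists X' \,\bigl(\SCC(X,X') \wedge \rho(X')\bigr)$, which takes $O(1)$ symbolic operations given $\SCC$ and a BDD for $\rho$. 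Here $\rho(X')$ means the BDD for $\rho$ with its variables primed; for a guarantee $g_{l,j}$ over $\O$ we use $\rho = g_{l,j}$, and for an assumption $a_{l,i}$ over $\I$ we use $\rho = a_{l,i}$, so that $\mathsf{InSCC}_{a_{l,i}}$ captures ``$a_{l,i}$ holds somewhere in $S_\I$'' and its negation captures the ``some assumption cannot hold in $S$'' clause.

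Next I assemble $\Acc$ by a direct Boolean combination mirroring Definition:
\begin{align*}
\Acc(X) := \bigwedge_{l=1}^{k} \left( \bigwedge_{j=1}^{m_l} \mathsf{InSCC}_{g_{l,j}}(X) \;\vee\; \bigvee_{i=1}^{n_l} \neg\, \mathsf{InSCC}_{a_{l,i}}(X) \right).
\end{align*}
By construction, $s \models \Acc(X)$ iff $s$'s SCC satisfies, for every $l$, the disjunction of ``all guarantees of $\varphi_l$ are witnessed in $S_\O$'' and ``some assumption of $\varphi_l$ is witnessed nowhere in $S_\I$'', which is exactly the definition of an accepting state. Correctness of the SCC-membership step uses that $\SCC(X,X')$ holds iff $s$ and $t$ lie in the same SCC, so $\exists X'(\SCC(X,X') \wedge \rho(X'))$ is true at $s$ exactly when some state of $s$'s SCC satisfies $\rho$; since all assertions are over $\I$ or $\O$ only and the game structure separates variables, this matches the projections $S_\I$, $S_\O$ correctly.

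For the complexity bound: constructing $\SCC$ costs $O(N)$ symbolic operations (already established). For each of the at most $|\varphi|$ assertions appearing in $\varphi$, obtaining its BDD costs $O(|\varphi|)$ in the worst case for the parsing/construction but more relevantly, once available, each $\mathsf{InSCC}_\rho$ costs $O(1)$ symbolic operations, for a total of $O(|\varphi|)$ symbolic operations across all assertions; the final conjunction/disjunction over $l, i, j$ involves $O(|\varphi|)$ Boolean symbolic operations. Hence the whole construction is $O(|\varphi| + N)$ symbolic operations. The main obstacle I anticipate is being careful that the priming/quantification in the $\mathsf{InSCC}_\rho$ definition is set up so that it genuinely reflects ``a state in the same SCC satisfies $\rho$'' rather than ``$\rho$ holds at $s$ itself,'' and making sure the bookkeeping of which assertions range over $\I$ versus $\O$ lines up with the two bullet conditions in the definition of accepting SCC — but this is routine once $\SCC$ is in hand, and no deep argument is needed beyond the correctness of $\SCC$ already asserted in Section~6.1.
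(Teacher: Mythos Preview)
Your proposal is correct and essentially identical to the paper's own proof: the paper also builds, for each guarantee $g_{l,j}$ and assumption $a_{l,i}$, a BDD expressing ``the SCC of $s$ contains a $\rho$-state'' via $\exists X'(\SCC(X,X')\wedge \rho(X'))$ (it calls these $\mathsf{SGAR}_{l,j}$ and, dually with $\forall$, $\mathsf{SASM}_{l,i}$), and then assembles $\Acc$ by the same Boolean combination you wrote. Your $\neg\,\mathsf{InSCC}_{a_{l,i}}$ is exactly the paper's $\mathsf{SASM}_{l,i}$ by De~Morgan, and the complexity accounting matches.
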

\begin{proof}
Consider the BDD $\SCC(X,X')$, constructed in $O(N)$ operations (see Section~\ref{app:symbolicweakbuchi}), and take a component $\varphi_l= 
\bigwedge_{i=1}^{n_l} \ltlG\ltlF(a_{l,i}) \rightarrow \bigwedge_{j=1}^{m_l} \ltlG\ltlF(g_{l,j})$. 
First, we identify all states that their SCC includes a $g_{l,j}$-state. For $j\in\{1,\dots,m_l\}$, let $\mathsf{GAR}_{_{l,j}}(X,X')$ be a BDD such that $(i_0,o_0,i_1',o_1')\models \mathsf{GAR}_{{l,j}}(X,X')$ iff $o_1\models g_{l,j} $.
Write $\mathsf{SGAR}_{l,j}(X):= \exists X'(\SCC \wedge \mathsf{GAR}_{{l,j}})$. Therefore, $s\models \mathsf{SGAR}_{l,j}(X)$ iff the SCC of $s$ includes a $g_{l,j}$-state.

Now, we focus on the assumptions. In similar to the former case, let $\mathsf{ASM}_{{l,j}}(X,X')$ be a BDD  such that $(i_0,o_0,i_1',o_1')\models \mathsf{ASM}_{{l,j}}$ iff $\inp_1\models a_{l,i}$.
Write 
 $\mathsf{SASM}_{l,j(X)}:=\forall X'(\SCC(X,X')\rightarrow \neg\mathsf{ASM}_{{l,j}}(X,X'))$. Therefore, $s\models \mathsf{SASM}_{l,j}(X)$ iff $s$'s SCC does not include an $a_{l,j}$-state.

As a result, the BDD  $\Acc:=\bigwedge_{l=1}^k(  (\bigwedge_{j=1}^{m_l}\mathsf{SGAR}_{l,j}) \vee (\bigvee_{i=1}^{n_l} \mathsf{SASM}_{l,i}))$ satisfies the require, and its construction is performed in time $O(|\varphi|+N)$. \qed
\end{proof}

  \begin{algorithm}[t!]
  	\caption{A separated GR($k$) Winning Strategy}
  	\label{algo:winning-strategy}
  	\begin{algorithmic}[1]
		
		
		\Statex{\begin{center}
  				/*\quad\quad  Implementation of $f(s, i)$ 
  				\quad\quad*/
 		\end{center}}
	
	
	    
	    \If{$s\models \acc$}
	    
	        \State \Return $f_\travel(s,\inp)$
	    
	    \Else
	    
	        \State $\mathit{mem}=0$ \label{line:reset}
	    
	        \State \Return $f_B(s,\inp)$
	    
	    \EndIf
	    
		
  	\end{algorithmic}

  	\begin{algorithmic}[1]
		
		\Statex{\begin{center}
  				/*\quad\quad  Implementation of $f_\travel(s,i)$ \quad\quad*/
 		\end{center}}
	\setcounter{ALG@line}{6}
	
  	    \For{$i=0$ to $m-1$}
	    
	    \State $\mathit{nxt\_gar}=\mathit{mem}+i \mod m$
	    
  	    \State $\out= f_{r(\mathit{nxt\_gar})}(s,i)$
  	    \If{$\out\neq \bot$}
  	        \If{$(i, o) \models \gar_{\mathit{nxt\_gar}}$}
  	            \State $\mathit{mem} = \mathit{nxt\_gar} + 1$ \label{line:increment}
  	        \EndIf
  	        \State \Return $\out$ \label{line:return-reach}
  	    \EndIf
  		\EndFor
  		\State \Return $\out$ in the same SCC $S_\O$ of $s|_\O$
		\label{line:return-arbitrary}

  	\end{algorithmic}

  \end{algorithm}

After constructing the BDD $\Acc(\X)$ for $\acc$, we can obtain a winning strategy $f_B$ for the weak \Buchi game $(\GS, \ltlG \ltlF (\acc))$ via Algorithm~\ref{alg:weak-buchi}. Algorithm~\ref{algo:winning-strategy} then describes how to construct a winning strategy $f$ for the separated GR($k$) game $(\GS, \varphi)$ using $\acc$, $f_B$ and the strategy $f_\travel$ for visiting all guarantees in an SCC.

The strategy $f_\travel$ is described in Section~\ref{sec:symsol} and detailed in the second part of Algorithm~\ref{algo:winning-strategy}. Note that, for the sake of presentation, we assume that if  $s\in S$, an SCC that includes no $\gar_j$-state, then the reachability strategy $f_{r(j)}(s, \inp)$ for guarantee $\gar_j$ returns a distinguished value $\bot$. If all $f_{r(j)}$ return $\bot$, then no guarantee is satisfiable in $S$, and therefore $f_\travel$ simply returns an arbitrary output that remains inside $S$ (line~\ref{line:return-arbitrary}). Also note that $f_\travel$ updates its memory variable $\mathit{mem}$ whenever it reaches a state that satisfies a guarantee (line~\ref{line:increment}), and $f$ resets $\mathit{mem}$ whenever it switches from $f_\travel$ to $f_B$ (line~\ref{line:reset}).

\begin{lemma}
\label{lem:gr(k)-strategy-soundeness}
Let $W$ be the system's winning region for the weak \Buchi game $B = (\GS, \ltlG \ltlF (\acc))$.  Then, Algorithm~\ref{algo:winning-strategy} constructs a winning strategy from $W$.
\end{lemma}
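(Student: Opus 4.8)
The plan is to show that $f$, as constructed in Algorithm~\ref{algo:winning-strategy}, wins from every state $(\inp,\out)\in W$, by tracking how a play consistent with $f$ behaves relative to $W$ and relative to the accepting SCCs. The core invariant I would establish is: \emph{every play consistent with $f$ that starts in $W$ stays in $W$ forever}. This splits into two cases depending on which sub-strategy is active. While the play is in a non-accepting state, $f$ plays $f_B$; since $f_B$ is the winning strategy for the weak \Buchi game $B$ and $W$ is its winning region, $f_B$ keeps the play inside $W$ (this is the standard closure property of winning regions under a winning strategy). While the play is in an accepting state, $f$ plays $f_\travel$; here I would invoke the crucial fact that the relevant weak \Buchi game separates variables. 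I would argue that $f_\travel$, by construction, only moves along edges of $G_\O$ that stay inside the current output SCC $S|_\O$, and meanwhile the environment — no matter what it does — moves along $G_\I$. The only way to leave $W$ would be for the play to leave the SCC $S$ of $G$ that it currently occupies, and this can only happen via the environment's move in $G_\I$ (since $f_\travel$ stays in $S|_\O$). In that event, by Corollary~\ref{cor:delay} (which says $W$ is a union of SCCs), I need to show the new SCC is still inside $W$; but that follows directly from Theorem~\ref{thm:delay}: from the previous winning state $(\inp_0,\out_0)$ there is a path in $G_\I$ to the new input $\inp_n$ and a path in $G_\O$ from the newly chosen output $\out_{-m}$ to $\out_0$, so $(\inp_n,\out_{-m})\in W$ too. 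This is exactly the reasoning already spelled out in Lemma~\ref{lem:win}, so that lemma does most of the work.

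Having established that the play stays in $W$, the next step is to show the play eventually converges to an accepting SCC. Since $f_B$ is winning for $B=(\GS,\ltlG\ltlF(\acc))$ and the play never leaves $W$, as long as $f_B$ is being followed the play is guaranteed to reach an accepting state; once it does, $f$ switches to $f_\travel$, which (as just argued) keeps the play inside that same accepting SCC $S$ unless the environment forces a departure — and in that case the play re-enters a (possibly different, but by the argument above still winning) SCC, and if that SCC is non-accepting, $f_B$ resumes and again drives the play to an accepting state. The subtle point is that this ping-pong between $f_B$ and $f_\travel$ cannot go on forever: because the game graph is finite and a play eventually stabilizes in a single SCC, and because every play consistent with $f$ stays in $W$, the SCC it stabilizes in must be one from which $f_B$ cannot be forced to leave towards a non-accepting region — hence an accepting SCC. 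I would phrase this as: the limit SCC of the play is accepting, because if it were non-accepting, $f_B$ would be active there forever yet $f_B$ is winning for $B$, contradiction.

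Finally, once the play has stabilized in an accepting SCC $S$, I would verify that $f_\travel$ actually forces $\varphi$ to be satisfied. Fix a component $\varphi_l = \bigwedge_{i=1}^{n_l}\ltlG\ltlF(a_{l,i})\to\bigwedge_{j=1}^{m_l}\ltlG\ltlF(g_{l,j})$. Since $S$ is accepting, either (a) some assumption $a_{l,i}$ is never satisfied by any state of $S|_\I$ — then $\ltlG\ltlF(a_{l,i})$ fails on the tail of the play, so $\varphi_l$ holds vacuously; or (b) every guarantee $g_{l,j}$ is satisfiable somewhere in $S|_\O$. In case (b), I would argue from the round-robin memory discipline of $f_\travel$ (lines 7--15 of Algorithm~\ref{algo:winning-strategy}): the memory variable $\mathit{mem}$ cycles through $\{0,\dots,m-1\}$, and whenever a guarantee becomes the current target, $f_{r(\mathit{nxt\_gar})}$ drives the output to a state satisfying it in finitely many steps (the reachability strategy stays inside $S|_\O$ and a target state exists), after which $\mathit{mem}$ advances. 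So every guarantee index is visited infinitely often and each such visit is realized in finite time, giving $\ltlG\ltlF(g_{l,j})$ for all $j$. Hence every $\varphi_l$ holds, so $\varphi$ holds on the play. The main obstacle I anticipate is the bookkeeping in this last step — specifically, making rigorous that the interleaving of the $m$ reachability strategies under a single shared memory variable really does achieve each guarantee infinitely often without starving any of them, and that no reachability sub-strategy can get ``stuck'' (it cannot, because within a fixed SCC of $G_\O$ any state of that SCC is reachable from any other, and $f_\travel$ never leaves it). A secondary subtlety is confirming that the reset of $\mathit{mem}$ to $0$ on each switch back to $f_B$ does no harm: since each re-entry into an accepting SCC only needs the \emph{tail} behavior to be correct, and once the play stabilizes there are no further resets, the finitely many resets before stabilization are irrelevant.
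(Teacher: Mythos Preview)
Your proposal is correct and follows essentially the same approach as the paper's proof: both establish that plays consistent with $f$ remain in $W$ (via the delay property, Theorem~\ref{thm:delay}, when $f_\travel$ is active and via closure of the winning region when $f_B$ is active), that the play eventually stabilizes in an accepting SCC, and that the round-robin discipline of $f_\travel$ then satisfies each $\varphi_l$ either vacuously or by visiting all its guarantees infinitely often. Your organization (invariant $\to$ convergence $\to$ verification of $\varphi$) is slightly tidier than the paper's case-by-case treatment, and your contradiction argument for why the limit SCC must be accepting is a clean variant of the paper's appeal to the partial order on SCCs, but the underlying ideas are the same.
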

\begin{proof}
We will prove that if $s \in W$, then the strategy $f$ described in Algorithm~\ref{algo:winning-strategy} wins $(\GS, \varphi)$ from $s$. We reason by cases.

First, consider the case where $s \models acc$, i.e., $s$ is in an accepting SCC $S$. Then, $f(s, i)$ returns the output of $f_\travel(s, i)$. There are two possibilities:

\begin{itemize}
    \item $f_\travel$ returns from line~\ref{line:return-arbitrary}. This is only possible if $f_{r(\mathit{nxt\_gar})}(s, i) = \bot$ for every $\mathit{nxt\_gar} \in \{0, \ldots, m - 1\}$, meaning that there is no state in $S$ that satisfies any of the guarantees $\gar_\mathit{nxt\_gar}$. But by definition of accepting SCCs, this means that there must be no $\varphi_l$ for which all assumptions can be satisfied from inside $S$. Therefore, as long as the play remains inside $S$, $\varphi$ is vacuously satisfied and the system wins. Note that in this case, $f_\travel$ will always return from line~\ref{line:return-arbitrary}, and therefore always remain in $S$ unless the environment chooses an input that takes the play to a different SCC. In that case, however, by Theorem~\ref{thm:delay} the new SCC will also be inside the winning region.
    
    \item $f_\travel$ returns from line~\ref{line:return-reach}, and therefore returns the output from some function $f_{r(\mathit{nxt\_gar})}$. By construction of $f_{r(\mathit{nxt\_gar})}$, the output remains inside the same SCC and makes progress toward reaching a state that satisfies $\gar_\mathit{nxt\_gar}$. Note that the same $f_{r(\mathit{nxt\_gar})}$ will be chosen next time until $\gar_\mathit{nxt\_gar}$ is reached, at which point $\mathit{mem}$ will be updated to $\mathit{nxt\_gar} + 1\mod m$. Therefore, as long as the play remains inside $S$, $f_\travel$ will cycle through all $\gar_\mathit{nxt\_gar}$ that can be satisfied from $S$. This means that every $\varphi_l$ such that all guarantees hold in $S$ will be satisfied. Since $S$ is an accepting SCC, the only remaining $\varphi_l$s are those where some assumption cannot hold in $S$, which will therefore be vacuously satisfied. Therefore, the system is guaranteed to win while it remains in $S$. If the environment chooses an input that takes the play to a different SCC, again by Theorem~\ref{thm:delay} the new SCC will also be inside the winning region.
\end{itemize}

Next, consider the case where $s \not\models acc$, i.e., $s$ is in a non-accepting SCC. In this case, $f(s, i)$ sets $\mathit{mem}$ to $0$ and returns the output of $f_B(s, i)$. Since $s \in W$ and $f_B$ is a winning strategy for $B$, $f_B(s, i)$ is guaranteed to remain in $W$ and move towards an accepting SCC, as this is the only way for the system to win $B$. Since $f(s, i)$ will keep returning $f_B(s, i)$ while in a non-accepting SCC, the play is guaranteed to eventually reach an accepting SCC $S$. At this point $f$ will switch back to $f_\travel$.

Therefore, $f$ is guaranteed to always satisfy the winning condition $\varphi$ while the play remains inside an accepting SCC (via $f_\travel$), and it is guaranteed to always reach an accepting SCC from a non-accepting SCC (via $f_B$). Since the set of SCCs is finite and partially ordered by reachability, every play is guaranteed to eventually stay forever in a single SCC $S$. Then, the two conditions above guarantee that $S$ will be accepting and that $f$ will win from $S$. \qed
\end{proof}

\begin{lemma}
\label{lem:gr(k)-strategy-completness}
Let $W$ be the system's winning region for the weak \Buchi game $B = (\GS, \ltlG \ltlF (\acc))$.  Then, if $s\notin W$, the system cannot win the separated GR($k$) game $(\GS,\varphi)$ from $s$.
\end{lemma}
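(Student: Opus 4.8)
The plan is to show the contrapositive-free direction directly: if $s \notin W$, where $W$ is the winning region of the weak \Buchi game $B = (\GS, \ltlG\ltlF(\acc))$, then the environment has a spoiling strategy against every system strategy in the separated GR($k$) game $(\GS,\varphi)$. The starting point is determinacy of $\omega$-regular games (cited at the top of Appendix~\ref{app:StratProofs}): since $s \notin W$, the environment has a \emph{spoiling strategy} $g$ in the weak \Buchi game, i.e., a strategy that forces every consistent play to eventually remain forever inside a non-accepting SCC. I would fix this $g$ and argue that the very same $g$ spoils the system in $(\GS,\varphi)$.

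The key step is the following structural fact, already isolated in Section~\ref{sec:highlevel}: the GR($k$) winning condition cannot be satisfied from within a non-accepting SCC. Concretely, suppose a play $\pi$ eventually stabilizes in an SCC $S$ that is \emph{not} accepting. By the definition of $\acc$ (reproduced just above Lemma~\ref{lem:acc}), there exists some index $l \in \{1,\dots,k\}$ such that (i) for \emph{every} assumption $j \in \{1,\dots,n_l\}$ there is some $\inp \in S_\I$ with $\inp \models a_{l,j}$ — so every assumption $\ltlG\ltlF(a_{l,i})$ \emph{can} hold along some path in $S_\I$ — and (ii) there is some guarantee $j \in \{1,\dots,m_l\}$ such that no $\out \in S_\O$ satisfies $g_{l,j}$ — so the guarantee $\ltlG\ltlF(g_{l,j})$ \emph{cannot} hold once the output side is trapped in $S_\O$. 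Here I would use the separation of the game structure: once $\pi$ stabilizes in $S$, its input projection stabilizes in the SCC $S_\I$ of $G_\I$ and its output projection in the SCC $S_\O$ of $G_\O$, and these two halves evolve independently. To make $\varphi$ false it then suffices to ensure that every assumption of the $l$-th conjunct is met infinitely often while the unsatisfiable guarantee is (automatically) met only finitely often. The point is that the \emph{environment}, who controls the inputs, can guarantee part (i) from its side alone — it simply keeps visiting each $a_{l,i}$-witness in $S_\I$ in round-robin fashion, which is possible because $S_\I$ is strongly connected and contains a witness for each $i$. Meanwhile the guarantee $g_{l,j}$ is falsified no matter what the system does, because $S_\O$ contains no $g_{l,j}$-state. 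Hence $\pi \not\models \varphi_l$, so $\pi \not\models \varphi$.

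It remains to assemble these pieces. Let $f$ be an arbitrary system strategy in $(\GS,\varphi)$. Against the spoiling strategy $g$ of the weak \Buchi game, consider any play $\pi$ consistent with both $f$ and $g$ that starts at $s$ (or, for realizability, at any initial state reachable from $\theta_\I \land \theta_\O$ with $s \notin W$). By choice of $g$, $\pi$ eventually stays forever in some non-accepting SCC $S$. The environment, within $g$, additionally commits to cycling through the $a_{l,i}$-witnesses of $S_\I$ once the play has entered $S$; this is a harmless refinement of $g$ since it only constrains input moves inside $S_\I$, which $g$ was already confining the play to. By the argument of the previous paragraph, $\pi \not\models \varphi$. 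Since $f$ was arbitrary, the system has no winning strategy from $s$ in $(\GS,\varphi)$. I expect the main obstacle to be the bookkeeping needed to fold the round-robin input schedule into the weak \Buchi spoiling strategy cleanly — one must be careful that the refinement does not conflict with whatever $g$ does \emph{before} the play settles into $S$, and that the round-robin is well-defined across the (finitely many) possible terminal non-accepting SCCs. This can be handled by letting the environment act as $g$ dictates until the SCC has stabilized, and only then switching to the schedule appropriate to that SCC; finiteness of the SCC decomposition makes this a finite case analysis. Together with Lemma~\ref{lem:gr(k)-strategy-soundeness}, this yields Theorems~\ref{thm:realiizability} and~\ref{thm:main}.
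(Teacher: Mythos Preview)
Your approach is essentially the paper's: obtain an environment spoiling strategy $g$ for the weak \Buchi game by determinacy, identify a violated conjunct $\varphi_l$ in the limiting non-accepting SCC, and have the environment cycle through the $a_{l,i}$-witnesses in $S_\I$ (the paper calls this cycling strategy $f^e_\travel$). The gap is in how you glue $g$ to the round-robin. You claim the round-robin is ``a harmless refinement of $g$ since it only constrains input moves inside $S_\I$, which $g$ was already confining the play to,'' and later propose that the environment follow $g$ ``until the SCC has stabilized'' and then switch. Neither works as stated. Once the environment abandons $g$'s prescribed moves for the round-robin (even while staying in $S_\I$), the system may respond by leaving $S_\O$, and you have not shown the resulting state remains in $\neg W$; the spoiling guarantee from $g$ is only valid while $g$ is actually being followed. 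And ``until the SCC has stabilized'' is not an observable stopping condition for the environment, since stabilization depends on the system's \emph{future} moves, so this does not define a strategy.

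The paper fills exactly this gap by invoking an environment-side Delay Property, obtained from Theorem~\ref{thm:delay} by swapping the roles of inputs and outputs: if the environment wins $B$ from $(\inp_0,\out_0)$, there is a path $\inp_{-m}\to\inp_0$ in $G_\I$, and a path $\out_0\to\out_n$ in $G_\O$, then the environment also wins from $(\inp_{-m},\out_n)$. Applied here, this says that while the environment roams freely inside $S_\I$, any output move by the system keeps the play in $\neg W$; the combined strategy $g$-then-$f^e_\travel$ is then justified exactly as Lemma~\ref{lem:win} justifies $f_B$-then-$f_\travel$ for the system. Your proposal becomes complete once you replace the ``harmless refinement'' hand-wave with this dual Delay Property.
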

\begin{proof}
Since $s \not\in W$, there is no strategy that allows the system to force the play to converge to an accepting SCC. Since every play must converge to some SCC, the environment can force the play to converge to a non-accepting SCC from $s$.

Note that an SCC $S$ is non-accepting if there is at least one $\varphi_l$ that the system cannot satisfy from inside $S$. That is, the environment can visit all the assumptions but the system cannot visit all the guarantees in $S$.

Therefore, we can construct a function $f^e_\travel$ for the environment that is analogous to the function $f_\travel$ for the system but cycles through all assumptions rather than all guarantees. Note that $f^e_\travel$ is guaranteed to win for the environment as long as the play remains inside a non-accepting SCC.

Finally, note that a Delay Property analogous to Theorem~\ref{thm:delay} can be proved for the environment by simply swapping inputs and outputs. Therefore, the environment can ensure that the play always remains outside of $W$, it converges to a non-accepting SCC, and the environment can win from that SCC (via $f^e_\travel$). Therefore, the system cannot win the separated GR($k$) game from $s$.\qed
\end{proof}

Clearly, Theorem~\ref{thm:realiizability} immediately follows from Lemmas~\ref{lem:gr(k)-strategy-soundeness} and~\ref{lem:gr(k)-strategy-completness}. For Theorem~\ref{thm:main}, the complexity of realizability follows from Lemmas~\ref{lem:acc},~\ref{lem:gr(k)-strategy-soundeness} and~\ref{lem:gr(k)-strategy-completness} and Theorem~\ref{thm:weakbuchi}. The complexity of synthesis follows from realizability with the addition of the computation of $f_\travel$, which requires computing a reachability game for each $f_{r(j)}$, taking $O((\#\mathit{gars})N)\leq O(|\varphi|N)$ symbolic operations. 






\end{document}